\numberwithin{equation}{section}		
\numberwithin{figure}{section}			
\numberwithin{table}{section}				
\newtheorem{defi}{Definition}[section]
\newtheorem{lem}{Lemma}[section]
\newtheorem{thm}{Theorem}[section]
\newtheorem{prop}{Proposition}[section]
\newcommand{\bnorm}[1]{\left|\left|{#1}\right|\right|}
\newcommand{\vect}[1]{\boldsymbol{\mathbf{#1}}}
\title{Super-resolution of positive near-colliding point sources
	\thanks{\footnotesize This work was supported in part by the Swiss National Science Foundation grant number
		200021--200307.}}
\author{Ping Liu\thanks{\footnotesize Department of Mathematics, ETH Z\"urich, R\"amistrasse 101, CH-8092 Z\"urich, Switzerland (ping.liu@sam.math.ethz.ch,  habib.ammari@math.ethz.ch).}  \and Habib Ammari\footnotemark[2]}
\date{}
\begin{document}
\maketitle
\begin{abstract}
In this paper, we analyze the capacity of super-resolution of one-dimensional positive sources.  In particular, we consider the same setting as in \cite{batenkov2019super} and generalize the results there to the case of super-resolving positive sources. To be more specific, we consider resolving $d$ positive point sources with $p \leqslant d$ nodes closely-spaced and forming a cluster, while the rest of the nodes are well separated. Similarly to \cite{batenkov2019super}, our results show that when the noise level $\epsilon \lesssim \mathrm{SRF}^{-2 p+1}$, where  $\mathrm{SRF}=(\Omega \Delta)^{-1}$ with $\Omega$ being the cutoff frequency and $\Delta$ the minimal separation between the nodes, the minimax error rate for reconstructing  the cluster nodes is of order $\frac{1}{\Omega} \mathrm{SRF}^{2 p-2} \epsilon$, while for recovering the corresponding amplitudes $\left\{a_j\right\}$ the rate is of  order $\mathrm{SRF}^{2 p-1} \epsilon$. For the non-cluster nodes, the corresponding minimax rates for the recovery of nodes and amplitudes are of order $\frac{\epsilon}{\Omega}$ and $\epsilon$, respectively. Our numerical experiments show that the Matrix Pencil method achieves the above optimal bounds when resolving the positive sources.
\end{abstract}

\section{Introduction}

In recent years, the problem of super-resolution (SR), which seeks to extract fine details of a signal from its noisy Fourier data in a bounded frequency domain, draws increasing interest in the field of applied mathematics. In particular, considerable progress has been made in the study of super-resolution of sparse signals, e.g.  a host of algorithms  \cite{candes2014towards, duval2015exact, poon2019, tang2013compressed, tang2014near, morgenshtern2016super, morgenshtern2020super, denoyelle2017support} were devised for resolving signals with a sparse prior. The sparse signals are frequently modeled as discrete measures
$$
F(x)=\sum_{j=1}^d a_j \delta\left(x-x_j\right), \quad x_j \in \mathbb{R},
$$
where $\delta$ is the Dirac's $\delta$-distribution. Let $\mathcal{F}[F]$ denote the Fourier transform of $F$ :
$$
\mathcal{F}[F](s)=\int_{-\infty}^{\infty} F(x) e^{-2 \pi i s x} \mathrm{~d} x .
$$
The noisy spectral data of the signal $F$ is modeled as a function $\Phi$ satisfying, 
\begin{equation}\label{equ:modelsetting1}
|\Phi(s)-\mathcal{F}[F](s)| \leqslant \epsilon, \quad s \in[-\Omega, \Omega],
\end{equation}
where $\epsilon>0$ represents the noise level and $\Omega>0$ is the cutoff frequency. The sparse SR problem considered in this paper reads: given $\Phi$ as above, estimate the unknown parameters of $F$, namely the amplitudes $\left\{a_j\right\}$ and the nodes $\left\{x_j\right\}$. The minimax error rate for recovering the nodes and amplitudes from the spectral data $\Phi(s)$ has been established in \cite{batenkov2019super}. In the present paper, we aim at exploring the corresponding minimax error rate for resolving positive signals. Since our result is a generalization of the result in \cite{batenkov2019super} which deals with complex signals, we utilize  the same notation, concepts, and configurations as those in \cite{batenkov2019super} for the sake of consistency of the two papers and the convenience of reading.

\subsection{Main contribution}
The main contribution of this paper is the generalization of the estimates in \cite{batenkov2019super} for the minimax error rate for complex signals in the off-the-grid setting  to the case of resolving positive signals.  We consider the case where the nodes $\left\{x_j\right\}$ can take arbitrary real values and the amplitudes $\left\{a_j\right\}$ are known to be positive. We consider the same distribution of nodes as in \cite{batenkov2019super} where it is assumed that $p$ nodes (approximately uniformly distributed), $x_\kappa, \ldots, x_{\kappa+p-1}$, form a small cluster and the rest of the nodes are away from all the other nodes (see Definition \ref{defi:clusterconfig} below). We show in Theorem \ref{thm:minimax1} that for $\epsilon \lesssim(\Omega \Delta)^{2p-1}$ with $\Delta$ being the minimum separation of the clustered nodes, in the worst-case scenario, the errors of recovered nodes $x_j'$ and amplitudes $a_j'>0$ by any minimax algorithm (see Definition \ref{defi:minimaxerror} below), satisfy  
\begin{itemize}
\item For the  non-cluster nodes:
\begin{align*}
	&\max _{j \notin\{\kappa, \ldots, \kappa+p-1\}}\left|x_j-x_j^{\prime}\right| \asymp \frac{\epsilon}{\Omega}, \\
	&\max _{j \notin\{\kappa, \ldots, \kappa+p-1\}}\left|a_j-a_j^{\prime}\right| \asymp \epsilon ;
\end{align*}
\item For the cluster nodes:
\begin{align*}
	&\max _{j \in\{\kappa, \ldots, \kappa+p-1\}}\left|x_j-x_j^{\prime}\right| \asymp \frac{\epsilon}{\Omega}(\Omega \Delta)^{-2 p+2}, \\
	&\max _{j \in\{\kappa, \ldots, \kappa+p-1\}}\left|a_j-a_j^{\prime}\right| \asymp \epsilon(\Omega \Delta)^{-2 p+1}.
\end{align*}
\end{itemize}

Our results reveal that the minimax error rates for recovering the nodes and amplitudes of positive signals in the super-resolution problem are the same as those for resolving general complex signals \cite{batenkov2019super}. To be more specific, for $\epsilon \lesssim \mathrm{SRF}^{-2p+1}$ with $\mathrm{SRF}:= (\Omega \Delta)^{-1}$, the minimax error rate for reconstructing the cluster nodes of positive signals is of the order $(\mathrm{SRF})^{2p-2} \frac{\epsilon}{\Omega}$, while for recovering the corresponding amplitudes the rate is of the order $(\mathrm{SRF})^{2p-1}\epsilon$. On the other hand, the corresponding minimax rates for the recovery of the non-cluster nodes and amplitudes are of the order $\frac{\epsilon}{\Omega}$ and $\epsilon$, respectively. These also indicate that the non-cluster nodes $\left\{x_j\right\}_{j \notin\{\kappa, \ldots, \kappa+p-1\}}$ can be recovered with much better stability than the cluster nodes. 

The main novelty we rely in analyzing the case of positive signals lies in a crucial observation in estimating the lower bound of diameter of the error set (Definition \ref{defi:errorset}). In particular, in Theorem \ref{thm:lowerexampleconstru1}, we observe and demonstrate that the recovered and the underlying signals in the example constructed in \cite{batenkov2019super} can actually be positive signals at the same time. 

We also examine the performance limit of the Matrix Pencil method in resolving positive signals by numerical experiments. The oberved error amplification in the experiments exactly verifies our theory for the minimax error rate. This also indicates that the Matrix Pencil method has the optimal performance in super-resolving positive sources.  

\subsection{Related work and discussion}
In 1992, Donoho first studied the possibility and difficulties of super-resolving multiple sources from  noisy measurements.  In particular, he considered measures supported on a lattice $\{k\Delta\}_{k=-\infty}^{\infty}$ and regularized by a so-called “Rayleigh index”. The measurement is then the noisy Fourier transform of the discrete measure with cutoff frequency $\Omega$.  He derived both the lower and upper bounds for the minimax error of the amplitude recovery in terms of the noise level, grid spacing, cutoff frequency, and Rayleigh index. His results emphasize the importance of the sparsity in the super-resolution. The results were improved in recent years for the case when resolving $n$-sparse on-the-grid sources \cite{demanet2015recoverability}. Concretely, the authors of \cite{demanet2015recoverability} showed that the minimax error rate for amplitudes recovery scales like $\mathrm{SRF}^{2n-1}\epsilon$, where $\epsilon$ is the noise level and $\mathrm{SRF}:= \frac{1}{\Delta\Omega}$ is the super-resolution factor. Similar results for multi-clumps cases were also derived in \cite{batenkov2020conditioning, li2021stable}.

A closely related work to the present paper is \cite{batenkov2019super}, in which the authors derived sharp minimax errors for the location and the amplitude recovery of off-the-grid sources. They showed that for complex sources satisfying the $(p,h,T,\tau, \eta)$-clustered configuration (Definition \ref{defi:clusterconfig}) and $\epsilon \lesssim \mathrm{SRF}^{-2p+1}$ with $p$ being the number of the cluster nodes, the minimax error rate for reconstructing of the cluster nodes is of the order $(\mathrm{SRF})^{2p-2} \frac{\epsilon}{\Omega}$, while for recovering the corresponding amplitudes the rate is of the order $(\mathrm{SRF})^{2p-1}\epsilon$. Moreover, the corresponding minimax rates for the recovery of the non-cluster nodes and amplitudes are of the order $\frac{\epsilon}{\Omega}$ and $\epsilon$, respectively. As mentioned above, in the present paper we have generalized these results to the case when resolving positive sources. Thus, the minimax error estimations for super-resolving both one-dimensional complex and positive sources are well established now.

On the other hand, in order to characterize the exact resolution in the number and location recovery, in the earlier works \cite{liu2021mathematicaloned, liu2021theorylse, liu2021mathematicalhighd, liu2022nearly, liu2022mathematicalpositive, liu2022mathematicalSR} the authors have defined the so-called "computational resolution limits", which characterize the minimum required distance between point sources so that their number and locations can be stably resolved under certain noise level. It was shown that the computational resolution limits for the number and location recoveries in the $k$-dimensional super-resolution problem should be bounded above by respectively $\frac{C_{num}(k, n)}{\Omega}\left(\frac{\sigma}{m_{\min }}\right)^{\frac{1}{2 n-2}}$ and $\frac{C_{supp}(k, n)}{\Omega}\left(\frac{\sigma}{m_{\min }}\right)^{\frac{1}{2 n-1}}$, where $C_{num}(k,n)$ and $C_{supp}(k,n)$ are certain constants depending only on the source number $n$ and the space dimensionality $k$. In particular, these results were generalized to the case when resolving positive sources in \cite{liu2022mathematicalpositive}. In this paper, a similar idea is used to generalize the miminax error estimate to the positive cases.  

 For other works related to the limit of super-resolution, we refer the readers to \cite{moitra2015super,chen2020algorithmic} for understanding the resolution limit from the perceptive of sample complexity and to  \cite{tang2015resolution, da2020stable} for the resolving limit of some algorithms.
 
 For the super-resolution of positive sources, to the best of our knowledge, the theoretical possibility for the super-resolution of positive sources was first considered in \cite{donoho1992maximum}, where the authors characterized the relation between the sparsity of the on-the-grid signal $ \mathbf x$ and the possibility of super-resolution in certain sense. Their definition and results focused on the possibility of overcoming Rayleigh limit in the presence of sufficiently small noise, while our work analyzes the non-asymptotic behavior of the reconstructions. 
 
 In recent years, some researchers analyzed the stability of some super-resolution algorithms in a non-asymptotic regime \cite{morgenshtern2016super, morgenshtern2020super, denoyelle2017support} and derived similar stability results to those proved in this paper, which exhibit the optimal performance of these algorithms.

\subsection{Organization of the paper}
The paper is organized in the following way. Section 2 presents the main results of the minimax error rate and Section 3 exhibits the performance of Matrix Pencil method by numerical experiments. Section 4 proves the main results stated in Section 2.

\section{Minimax bound for the location and amplitude recoveries}\label{section:minimaxerror}
In this section, we present minimax error estimates for the location and amplitude recoveries in the super-resolution of positive signals.

\subsection{Notation and preliminaries}
We shall denote by $\mathcal{P}_d, \mathcal{P}_d^+$ the parameter space of respectively general and positive signals $F$ with amplitudes $a_j$'s and real, distinct and ordered nodes $x_j$'s:
\begin{align*}
&\mathcal{P}_d=\left\{(\mathbf{a}, \mathbf{x}): \mathbf{a}=\left(a_1, \ldots, a_d\right) \in \mathbb{C}^d, \mathbf{x}=\left(x_1, \ldots, x_d\right) \in \mathbb{R}^d, x_1<x_2<\ldots<x_d\right\},\\
&\mathcal{P}_d^+=\left\{(\mathbf{a}, \mathbf{x}): \mathbf{a}=\left(a_1, \ldots, a_d\right) \in (\mathbb{R}^+)^d, \mathbf{x}=\left(x_1, \ldots, x_d\right) \in \mathbb{R}^d, x_1<x_2<\ldots<x_d\right\},
\end{align*}
and identify the $F$'s with their parameters $(\mathbf{a}, \mathbf{x}) \in \mathcal{P}_d$ or $\mathcal{P}_d^+$. We denote
\[
\|F\|_{\infty} =\max \left(\|\mathbf{a}\|_{\infty},\|\mathbf{x}\|_{\infty}\right) .
\]
We shall also denote the orthogonal coordinate projections of a signal $F$ to the $j$-th node and $j$-th amplitude, respectively, by $P_{\mathbf{x}, j}: \mathcal{P}_d ( \mathcal{P}_d^+) \rightarrow \mathbb{R}$ and $P_{\mathbf{a}, j}: \mathcal{P}_d ( \mathcal{P}_d^+) \rightarrow \mathbb{C}(\mathbb R^+)$. 

Let $L_{\infty}[-\Omega, \Omega]$ be the space of bounded complex-valued functions defined on $[-\Omega, \Omega]$ with the norm $\|e\|_{\infty}=\max _{|s| \leqslant \Omega}|e(s)|$.

\begin{defi}
Given $\Omega>0$ and $U \subseteq \mathcal{P}_d^+$, we denote by $\mathfrak{F}(\Omega, U)$ the class of all admissible reconstruction algorithms, i.e.,
$$
\mathfrak{F}(\Omega, U)=\left\{f: L_{\infty}[-\Omega, \Omega] \rightarrow U\right\}. 
$$
\end{defi}

\begin{defi}\label{defi:minimaxerror}
Let $U \subset \mathcal{P}_d^+$. We consider the minimax error rate in estimating a signal $F \in U$ from $\Omega$-bandlimited data as in (\ref{equ:modelsetting1}), with a measurement error $\epsilon>0$:
\[
\mathcal{E}^+(\epsilon, U, \Omega)=\inf _{\mathfrak{f} \in \mathfrak{F}(\Omega, U)} \sup _{F \in U} \sup _{\|e\|_{\infty} \leqslant \epsilon}\|F-\mathfrak{f}(\mathcal{F}[F]+e)\|_{\infty}.
\]
\end{defi}

Note that in order to analyze how the minimax error rate relates to the separation of the nodes and the magnitudes of the amplitudes, we will consider $U\in \mathcal P_{d}^+$ with certain specific constraints in the following discussions.

Similarly, the minimax errors of estimating the individual nodes and the amplitudes of $F \in U$ are defined respectively by 
\[
\begin{aligned}
&\mathcal{E}^{+,\mathbf{x}, j}(\epsilon, U, \Omega)=\inf _{\mathfrak{f} \in \mathfrak{F}(\Omega, U)} \sup _{F \in U} \sup _{\|e\| \leqslant \epsilon}\left|P_{\mathbf{x}, j}(F)-P_{\mathbf{x}, j}(\mathfrak{f}(\mathcal{F}(F)+e))\right|, \\
&\mathcal{E}^{+,\mathbf{a}, j}(\epsilon, U, \Omega)=\inf _{\mathfrak{f} \in \mathfrak{F}(\Omega, U)} \sup _{F \in U} \sup _{\|e\|_{\infty} \leqslant \epsilon}\left|P_{\mathbf{a}, j}(F)-P_{\mathbf{a}, j}(\mathfrak{f}(\mathcal{F}(F)+e))\right| .
\end{aligned}
\]

For a fixed $F \in \mathcal{P}_d^+$, we define the positive and general $\epsilon$-error set as follows. 

\begin{defi}\label{defi:errorset}
The error set of positive signals $E_{\epsilon, \Omega}^{+}(F) \subset \mathcal{P}_d^{+}$ is the set consisting of all the signals $\hat F \in \mathcal{P}_d^{+}$ with
\begin{equation}\label{equ:errorset1}
\left|\mathcal{F}[\hat F](\omega)-\mathcal{F}[F](\omega)\right| \leqslant \epsilon, \quad \omega \in[-\Omega, \Omega] .
\end{equation}
Moreover, the error set of general signal $E_{\epsilon, \Omega}(F) \subset \mathcal{P}_d$ is the set consisting of all the $\hat F \in \mathcal{P}_d$ satisfying (\ref{equ:errorset1}).
\end{defi}

We will denote by $E_\epsilon^{+, \mathbf{x}, j}(F)=E_{\epsilon, \Omega}^{+, \mathbf{x}, j}(F)$ and $E_\epsilon^{+, \mathbf{a}, j}(F)=E_{\epsilon, \Omega}^{+, \mathbf{a}, j}(F)$ the projections of the error set of positive signals onto the individual nodes and the amplitudes components, respectively:
$$
\begin{aligned}
&E_{\epsilon, \Omega}^{+, \mathbf{x}, j}(F)=\left\{\mathbf{x}_j^{\prime} \in \mathbb{R}:\left(\mathbf{a}^{\prime}, \mathbf{x}^{\prime}\right) \in E_{\epsilon, \Omega}^{+}(F)\right\} \equiv P_{\mathbf{x}, j} E_{\epsilon, \Omega}^{+}(F), \\
&E_{\epsilon, \Omega}^{+, \mathbf{a}, j}(F)=\left\{\mathbf{a}_j^{\prime} \in \mathbb{R}^+:\left(\mathbf{a}^{\prime}, \mathbf{x}^{\prime}\right) \in E_{\epsilon, \Omega}^{+}(F)\right\} \equiv P_{\mathbf{a}_{,} j} E_{\epsilon, \Omega}^+(F).
\end{aligned}
$$

Furthermore, we denote by $E_\epsilon^{\mathbf{x}, j}(F)=E_{\epsilon, \Omega}^{\mathbf{x}, j}(F)$ and $E_\epsilon^{\mathbf{a}, j}(F)=E_{\epsilon, \Omega}^{\mathbf{a}, j}(F)$ the projections of the error set of general signals onto the individual nodes and the amplitudes components, respectively:
$$
\begin{aligned}
	&E_{\epsilon, \Omega}^{\mathbf{x}, j}(F)=\left\{\mathbf{x}_j^{\prime} \in \mathbb{R}:\left(\mathbf{a}^{\prime}, \mathbf{x}^{\prime}\right) \in E_{\epsilon, \Omega}(F)\right\} \equiv P_{\mathbf{x}, j} E_{\epsilon, \Omega}(F), \\
	&E_{\epsilon, \Omega}^{\mathbf{a}, j}(F)=\left\{\mathbf{a}_j^{\prime} \in \mathbb{C}:\left(\mathbf{a}^{\prime}, \mathbf{x}^{\prime}\right) \in E_{\epsilon, \Omega}(F)\right\} \equiv P_{\mathbf{a}_{,} j} E_{\epsilon, \Omega}(F) .
\end{aligned}
$$

For any subset $V$ of a normed vector space with norm $\|\cdot\|$, the diameter of $V$ is given by
$$
\mathrm{diam}(V)=\sup _{\mathbf{v}^{\prime}, \mathbf{v}^{\prime \prime} \in V}\left\|\mathbf{v}^{\prime}-\mathbf{v}^{\prime \prime}\right\|_{\infty} .
$$

By the theory of optimal recovery \cite{micchelli1977survey, micchelli1985lectures, micchelli1976optimal}, the minimax errors are directly linked to the diameter of the corresponding projections of the error set. More specifically, we have the following proposition that is similar to Proposition 2.4 in  \cite{batenkov2019super}. 

\begin{prop}\label{prop:diameterbound1}
For $U \subset \mathcal{P}_d^+, \Omega>0,1 \leqslant j \leqslant d$ and $\epsilon>0$, we have
\begin{equation}
\begin{aligned}
&\frac{1}{2} \sup _{F: E_{\frac{1}{2} \epsilon, \Omega}^+(F) \subseteq U} \mathrm{diam}\left(E_{\frac{1}{2} \epsilon, \Omega}^+(F)\right) \leqslant \mathcal{E}^+(\epsilon, U, \Omega) \leqslant \sup _{F \in U} \mathrm{diam}\left(E_{2 \epsilon, \Omega}^+(F)\right), \\
&\frac{1}{2} \sup _{F: E_{\frac{1}{2} \epsilon, \Omega}^+(F) \subseteq U} \mathrm{diam}\left(E_{\frac{1}{2} \epsilon, \Omega}^{+, \mathbf{x}, j}(F)\right) \leqslant \mathcal{E}^{+,\mathbf{x}, j}(\epsilon, U, \Omega) \leqslant \sup _{F \in U} \mathrm{diam}\left(E_{2 \epsilon, \Omega}^{+,\mathbf{x}, j}(F)\right), \\
&\frac{1}{2} \sup _{F:E_{\frac{1}{2} \epsilon, \Omega}^+(F) \subseteq U} \mathrm{diam}\left(E_{\frac{1}{2} \epsilon, \Omega}^{+,\mathbf{a}, j}(F)\right) \leqslant \mathcal{E}^{+, \mathbf{a}, j}(\epsilon, U, \Omega) \leqslant \sup _{F \in U} \mathrm{diam}\left(E_{2 \epsilon, \Omega}^{+,\mathbf{a}, j}(F)\right).
\end{aligned}
\end{equation}
\end{prop}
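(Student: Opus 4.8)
The plan is to invoke the standard optimal-recovery machinery (\cite{micchelli1977survey,micchelli1985lectures,micchelli1976optimal}) and to follow the proof of Proposition~2.4 in \cite{batenkov2019super} essentially verbatim, since replacing the parameter space $\mathcal{P}_d$ by $\mathcal{P}_d^+$ changes nothing in the argument. I would establish the three displayed inequalities only for the full norm $\|\cdot\|_\infty$; the versions for the coordinate projections $P_{\mathbf{x},j}$ and $P_{\mathbf{a},j}$ then follow by repeating the same steps with $\|\cdot\|_\infty$ replaced by $|P_{\mathbf{x},j}(\cdot)|$ or $|P_{\mathbf{a},j}(\cdot)|$, using that $\mathrm{diam}\big(E_{\cdot,\Omega}^{+,\mathbf{x},j}(F)\big)=\sup\{|P_{\mathbf{x},j}(F')-P_{\mathbf{x},j}(F'')|:F',F''\in E_{\cdot,\Omega}^+(F)\}$ and the trivial fact that $F$ itself lies in each error set.

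For the upper bound I would construct one explicit admissible algorithm $\mathfrak{f}$: on input $\Phi$, output any $\hat F\in U$ that is $\epsilon$-consistent with $\Phi$ (i.e. $\|\mathcal{F}[\hat F]-\Phi\|_\infty\leqslant\epsilon$) whenever such a signal exists, and an arbitrary fixed element of $U$ otherwise. If $\Phi=\mathcal{F}[F]+e$ with $F\in U$ and $\|e\|_\infty\leqslant\epsilon$, then $F$ is $\epsilon$-consistent, so the output $\hat F$ is too; the triangle inequality gives $\|\mathcal{F}[\hat F]-\mathcal{F}[F]\|_\infty\leqslant 2\epsilon$, i.e. $\hat F\in E_{2\epsilon,\Omega}^+(F)$, and since trivially $F\in E_{2\epsilon,\Omega}^+(F)$, we get $\|F-\mathfrak{f}(\Phi)\|_\infty\leqslant\mathrm{diam}\big(E_{2\epsilon,\Omega}^+(F)\big)$. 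Taking suprema over $F\in U$ and $\|e\|_\infty\leqslant\epsilon$, and then the infimum over all algorithms, yields the right-hand inequalities.

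For the lower bound I would use the classical two-point construction. Fix any $F$ with $E_{\frac12\epsilon,\Omega}^+(F)\subseteq U$ (note $F$ lies in this set, so it is automatically in $U$); pick two arbitrary signals $F',F''$ from this error set and feed \emph{every} algorithm $\mathfrak{f}\in\mathfrak{F}(\Omega,U)$ the same data $\Phi:=\mathcal{F}[F]$. Since $\|\mathcal{F}[F']-\Phi\|_\infty\leqslant\tfrac12\epsilon\leqslant\epsilon$ and $F'\in U$, the definition of $\mathcal{E}^+(\epsilon,U,\Omega)$ forces $\|F'-\mathfrak{f}(\Phi)\|_\infty\leqslant\mathcal{E}^+(\epsilon,U,\Omega)$, and likewise for $F''$; the triangle inequality then gives $\|F'-F''\|_\infty\leqslant 2\,\mathcal{E}^+(\epsilon,U,\Omega)$, so that taking the supremum over $F',F''$ and then over the admissible $F$ produces the left-hand inequalities.

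The argument is essentially bookkeeping, and I expect no genuine obstacle; the only points requiring care are (i) the reconstruction map must be defined on \emph{all} of $L_\infty[-\Omega,\Omega]$, so its value on data that is inconsistent with $U$ must be fixed arbitrarily, but that value never enters the relevant suprema; (ii) the hypothesis $E_{\frac12\epsilon,\Omega}^+(F)\subseteq U$ in the lower bound is precisely what guarantees the competitor signals $F',F''$ lie in the domain $U$ on which $\mathfrak{f}$ is required to perform, which is why one works with $\tfrac12\epsilon$ there rather than $\epsilon$; and (iii) the radii $2\epsilon$ and $\tfrac12\epsilon$ and the outer factors $\tfrac12$ arise from two successive applications of the triangle inequality. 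I would also remark that $\mathcal{P}_d^+$ inheriting the same norm and projection structure as $\mathcal{P}_d$ is all that is needed to transfer the complex-case proof.
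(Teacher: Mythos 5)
Your proof is correct and is exactly the standard optimal-recovery argument that the paper itself invokes only by citation (it offers no proof beyond pointing to Proposition~2.4 of \cite{batenkov2019super}), so the approaches coincide. One wording nit in the lower bound: for a \emph{fixed} algorithm $\mathfrak{f}$ the definition only gives $\|F'-\mathfrak{f}(\Phi)\|_\infty\leqslant \sup_{F\in U}\sup_{\|e\|_\infty\leqslant\epsilon}\|F-\mathfrak{f}(\mathcal{F}[F]+e)\|_\infty$, the worst-case error of that particular $\mathfrak{f}$ rather than $\mathcal{E}^+(\epsilon,U,\Omega)$; the stated conclusion follows only after taking the infimum over $\mathfrak{f}$ at the end, which your argument does implicitly by running the two-point bound against every algorithm.
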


\subsection{Uniform estimates of minimax error for clustered configurations}
Similarly to \cite{batenkov2019super}, the main goal of this paper is to estimate $\mathcal{E}^{+,\mathbf{x}, j}(\epsilon, U, \Omega), \mathcal{E}^{+,\mathbf{a}, j}(\epsilon, U, \Omega)$, where $U \subset \mathcal{P}_d^+$ are certain compact subsets of  $\mathcal{P}_d^+$ consisting of signals with $p \leqslant d$ nodes that are nearly uniformly distributed, forming a cluster. To be more specific, the set $U$ is defined as follows; See also \cite[Definition 2.5]{batenkov2019super}.

\begin{defi}\label{defi:clusterconfig}(\textbf{Uniform cluster configuration})\\
Given $0<\tau, \eta \leqslant 1$ and $0<h \leqslant T$, a node vector $\mathbf{x}=\left(x_1, \ldots, x_d\right) \in \mathbb{R}^d$ is said to form a $(p, h, T, \tau, \eta)$-clustered configuration, if there exists a subset of $p$ nodes $\mathbf{x}^c=\left\{x_\kappa, \ldots, x_{\kappa+p-1}\right\} \subset \mathbf{x}, p \geqslant 2$, which satisfies the following conditions:
\begin{itemize}
\item[(i)] for each $x_j, x_k \in \mathbf{x}^c, j \neq k$,
\[
\tau h \leqslant\left|x_j-x_k\right| \leqslant h ;
\]
\item [(ii)] for $x_{\ell} \in \mathbf{x} \backslash \mathbf{x}^c$ and $x_j \in \mathbf{x}, \ell \neq j$,
\[
\eta T \leqslant\left|x_{\ell}-x_j\right| \leqslant T .
\]
\end{itemize}
\end{defi}

One of the main contributions of \cite{batenkov2019super} is an upper bound on $\operatorname{diam}\left(E_{\epsilon, \Omega}(F)\right)$, and its coordinate projections, for any signal $F$ forming a clustered configuration as above. Here, we generalize the result to the positive signal cases, which is a direct consequence of \cite[Theorem 2.6]{batenkov2019super}.

\begin{thm}\label{thm:diameterupperbound}(\textbf{Upper bound})\\
Let the positive signal $F=(\mathbf{a}, \mathbf{x}) \in \mathcal{P}_d^+$, such that $\mathbf{x}$ forms a $(p, h, T, \tau, \eta)$-clustered configuration and $0<m \leqslant\|\mathbf{a}\|$. Then, there exist positive constants $C_1, \ldots, C_5$, depending only on $d, p, m$, such that for each $\frac{C_4}{\eta T} \leqslant \Omega \leqslant \frac{C_5}{h}$ and $\epsilon \leqslant C_3(\Omega \tau h)^{2 p-1}$, it holds that
\[
\begin{gathered}
	\operatorname{diam}\left(E_{\epsilon, \Omega}^{+, \mathbf{x}, j}(F)\right) \leqslant \frac{C_1}{\Omega} \epsilon \times \begin{cases}(\Omega \tau h)^{-2 p+2}, & x_j \in \mathbf{x}^c, \\
		1, & x_j \in \mathbf{x} \backslash \mathbf{x}^c,\end{cases} \\
	\operatorname{diam}\left(E_{\epsilon, \Omega}^{+, \mathbf{a}, j}(F)\right) \leqslant C_2 \epsilon \times \begin{cases}(\Omega \tau h)^{-2 p+1}, & x_j \in \mathbf{x}^c, \\
		1, & x_j \in \mathbf{x} \backslash \mathbf{x}^c .\end{cases}
\end{gathered}
\]
\end{thm}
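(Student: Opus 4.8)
The plan is to reduce the positive-signal statement directly to the complex-signal result \cite[Theorem 2.6]{batenkov2019super}, which gives exactly the same upper bounds on $\operatorname{diam}(E_{\epsilon,\Omega}^{\mathbf{x},j}(F))$ and $\operatorname{diam}(E_{\epsilon,\Omega}^{\mathbf{a},j}(F))$ for any $F=(\mathbf{a},\mathbf{x})\in\mathcal{P}_d$ forming a $(p,h,T,\tau,\eta)$-clustered configuration with $0<m\leqslant\|\mathbf{a}\|$. The key observation is that the positive error set is a subset of the general one: since $\mathcal{P}_d^+\subset\mathcal{P}_d$, every $\hat F\in\mathcal{P}_d^+$ satisfying \eqref{equ:errorset1} also lies in $\mathcal{P}_d$ and satisfies \eqref{equ:errorset1}, hence $E_{\epsilon,\Omega}^+(F)\subseteq E_{\epsilon,\Omega}(F)$. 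Taking coordinate projections $P_{\mathbf{x},j}$ and $P_{\mathbf{a},j}$, which are monotone with respect to set inclusion, we get $E_{\epsilon,\Omega}^{+,\mathbf{x},j}(F)\subseteq E_{\epsilon,\Omega}^{\mathbf{x},j}(F)$ and $E_{\epsilon,\Omega}^{+,\mathbf{a},j}(F)\subseteq E_{\epsilon,\Omega}^{\mathbf{a},j}(F)$. Since the diameter is also monotone under inclusion, $\operatorname{diam}(E_{\epsilon,\Omega}^{+,\mathbf{x},j}(F))\leqslant\operatorname{diam}(E_{\epsilon,\Omega}^{\mathbf{x},j}(F))$ and likewise for the amplitude projections.

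With this reduction in hand, the steps are as follows. First I would verify that the hypotheses of the present theorem match those of \cite[Theorem 2.6]{batenkov2019super}: a positive signal $F=(\mathbf{a},\mathbf{x})\in\mathcal{P}_d^+$ with $\mathbf{x}$ forming a $(p,h,T,\tau,\eta)$-clustered configuration and $0<m\leqslant\|\mathbf{a}\|$ is in particular a signal in $\mathcal{P}_d$ with the same clustered configuration and the same lower bound on $\|\mathbf{a}\|$, so the cited theorem applies verbatim. Second, I would invoke that theorem to obtain the constants $C_1,\dots,C_5$ (depending only on $d,p,m$) and the conclusion that, for $\frac{C_4}{\eta T}\leqslant\Omega\leqslant\frac{C_5}{h}$ and $\epsilon\leqslant C_3(\Omega\tau h)^{2p-1}$,
\[
\operatorname{diam}\!\left(E_{\epsilon,\Omega}^{\mathbf{x},j}(F)\right)\leqslant\frac{C_1}{\Omega}\epsilon\times\begin{cases}(\Omega\tau h)^{-2p+2},&x_j\in\mathbf{x}^c,\\ 1,&x_j\in\mathbf{x}\backslash\mathbf{x}^c,\end{cases}
\qquad
\operatorname{diam}\!\left(E_{\epsilon,\Omega}^{\mathbf{a},j}(F)\right)\leqslant C_2\epsilon\times\begin{cases}(\Omega\tau h)^{-2p+1},&x_j\in\mathbf{x}^c,\\ 1,&x_j\in\mathbf{x}\backslash\mathbf{x}^c.\end{cases}
\]
Third, combining these with the inclusion-monotonicity inequalities from the first paragraph yields the claimed bounds on $\operatorname{diam}(E_{\epsilon,\Omega}^{+,\mathbf{x},j}(F))$ and $\operatorname{diam}(E_{\epsilon,\Omega}^{+,\mathbf{a},j}(F))$ with the \emph{same} constants.

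I do not expect a genuine obstacle here: the upper bound for positive signals is strictly easier than for complex signals, precisely because a smaller admissible class can only shrink the error set. The only things to be careful about are bookkeeping — confirming that $\|\cdot\|$ in $0<m\leqslant\|\mathbf{a}\|$ means the same norm in both papers, and that the projection/diameter operations genuinely preserve inclusions (they do, trivially). It is worth remarking that this monotonicity argument is completely one-sided: it gives nothing for the \emph{lower} bounds, since restricting to positive signals could in principle make the error set much smaller. That is exactly why the companion lower-bound result (Theorem \ref{thm:lowerexampleconstru1}) requires the separate, nontrivial observation that the worst-case example of \cite{batenkov2019super} can be realized by positive signals; the present theorem, by contrast, is an immediate corollary. $\qedblack$
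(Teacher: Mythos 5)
Your proposal is correct and follows exactly the paper's own argument: invoke \cite[Theorem 2.6]{batenkov2019super} for the general complex signal class, then use the inclusion $E_{\epsilon,\Omega}^{+}(F)\subseteq E_{\epsilon,\Omega}(F)$ together with the monotonicity of projections and diameters under set inclusion. The additional remarks about why this one-sided monotonicity cannot yield the lower bound are accurate and match the paper's motivation for Theorem \ref{thm:lowerexampleconstru1}.
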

\begin{proof}
	By Theorem 2.6 in \cite{batenkov2019super}, under the same condition we have 
	\[
	\begin{gathered}
		\operatorname{diam}\left(E_{\epsilon, \Omega}^{\mathbf{x}, j}(F)\right) \leqslant \frac{C_1}{\Omega} \epsilon \times \begin{cases}(\Omega \tau h)^{-2 p+2}, & x_j \in \mathbf{x}^c, \\
			1, & x_j \in \mathbf{x} \backslash \mathbf{x}^c,\end{cases} \\
		\operatorname{diam}\left(E_{\epsilon, \Omega}^{\mathbf{a}, j}(F)\right) \leqslant C_2 \epsilon \times \begin{cases}(\Omega \tau h)^{-2 p+1}, & x_j \in \mathbf{x}^c, \\
			1, & x_j \in \mathbf{x} \backslash \mathbf{x}^c .\end{cases}
	\end{gathered}
	\]
	On the other hand, note that for $\epsilon>0$, according to the definition of $\mathrm{diam}(\cdot)$ we have 
	\[
	\mathrm{diam}\left(E_{\epsilon,\Omega}^{+, \vect x, j}(F)\right) \leq \mathrm{diam}\left(E_{\epsilon,\Omega}^{\vect x, j}(F)\right), \quad  \mathrm{diam}\left(E_{\epsilon,\Omega}^{+, \vect a, j}(F)\right) \leq \mathrm{diam}\left(E_{\epsilon,\Omega}^{\vect a, j}(F)\right).
	\]
   This proves the theorem.
\end{proof}

The above estimates are optimal, as shown by our next main theorem. This is the main contribution of our paper, by which we non-trivially generalize the results in \cite[Theorem 2.7]{batenkov2019super}. For simplicity and without loss of generality, we assume that the index $\kappa$ is fixed in the result below.

\begin{thm}\label{thm:diameterlowerbound}(\textbf{Lower bound})\\ 
Let $m \leqslant M, 2 \leqslant p \leqslant d, \tau \leqslant \frac{1}{p-1}, \eta<\frac{1}{d}, T>0$ be fixed. There exist positive constants $C_1^{\prime} \ldots, C_5^{\prime}$, depending only on $d, p, m, M$, such that for every $\Omega, h$ satisfying $h \leqslant C_4^{\prime} T$ and $\Omega h \leqslant C_5^{\prime}$, there exists $F=(\mathbf{a}, \mathbf{x}) \in\mathcal {P}_d^+$, with $\mathbf{x}$ forming a $(p, h, T, \tau, \eta)$-clustered configuration, and with $0<m \leqslant\|\mathbf{a}\| \leq M< \infty$, such that for certain indices \( j_1, j_2 \in\{\kappa, \ldots, \kappa+p-1\} \) and every \( \epsilon \leqslant C_3^{\prime}(\Omega \tau h)^{2 p-1} \), it holds that
$$
\begin{aligned}
&\operatorname{diam}\left(E_{\epsilon, \Omega}^{+, \mathbf{x}, j}(F)\right) \geqslant \frac{C_1^{\prime}}{\Omega} \epsilon \times \begin{cases}(\Omega \tau h)^{-2 p+2}, &  j = j_1, \\
1, & \forall j \notin\{\kappa, \ldots, \kappa+p-1\},\end{cases} \\
&\operatorname{diam}\left(E_{\epsilon, \Omega}^{+, \mathbf{a}, j}(F)\right) \geqslant C_2^{\prime} \epsilon \times \begin{cases}(\Omega \tau h)^{-2 p+1}, & j =j_2, \\
1, & \forall j \notin\{\kappa, \ldots, \kappa+p-1\}.\end{cases}
\end{aligned}
$$
\end{thm}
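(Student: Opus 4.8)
The plan is to reduce the lower bound for positive signals to the known lower bound for complex signals from \cite[Theorem 2.7]{batenkov2019super}, by showing that the extremal pair of signals constructed there—whose difference is $\epsilon$-small in the Fourier data on $[-\Omega,\Omega]$ and whose node/amplitude coordinates are far apart—can be chosen to lie in $\mathcal{P}_d^+$, i.e.\ to have strictly positive amplitudes. This is precisely the "crucial observation" advertised in the introduction (Theorem \ref{thm:lowerexampleconstru1}). Concretely, I would first recall the construction of \cite{batenkov2019super}: one builds a single "mother" signal supported on $p$ colliding nodes with a prescribed divided-difference/vanishing-moment structure so that its Fourier transform is $O((\Omega h)^{2p-1})$ on the band, then splits it (or perturbs it) into two clustered signals $F$ and $\hat F$ that differ by this small signal, with either a node coordinate $j_1$ or an amplitude coordinate $j_2$ in the cluster shifted by the claimed amount. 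The non-cluster coordinates are handled by a separate, easier scaling argument (a single well-separated node admits an $O(\epsilon/\Omega)$ location ambiguity and an $O(\epsilon)$ amplitude ambiguity), which already automatically respects positivity for $\epsilon$ small.

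The key step, and the one I expect to be the main obstacle, is verifying positivity of the amplitudes throughout the cluster in the extremal construction. The construction in \cite{batenkov2019super} controls the \emph{differences} of signals and the magnitudes of coordinates, but the amplitudes there may be complex or sign-indefinite; one must show there is enough freedom to add a common positive "baseline" (a fixed configuration with amplitudes bounded below by $m$ and above by $M$, bounded away from $0$) without destroying the smallness of the Fourier difference or the separation of the perturbed coordinate. The mechanism is that the small difference signal has amplitudes of size $O(\text{(small)})$—in fact the cluster-node perturbation only changes one node by $\tfrac{C_1'}{\Omega}\epsilon(\Omega\tau h)^{-2p+2}$ and the accompanying amplitudes by $O(\epsilon(\Omega\tau h)^{-2p+1})$, which under the hypothesis $\epsilon\leqslant C_3'(\Omega\tau h)^{2p-1}$ is $O(1)$ and can be made smaller than $m/2$ by shrinking $C_3'$. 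Hence, choosing the baseline amplitudes all equal to, say, $(m+M)/2$ and adding the perturbation keeps every amplitude in $[m,M]$ and strictly positive. I would make this quantitative: re-examine the exact expressions for $\mathbf a$ and $\hat{\mathbf a}$ in the proof of \cite[Theorem 2.7]{batenkov2019super}, bound their entrywise distance to a common positive vector, and pick $C_3',\ldots,C_5'$ accordingly.

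Having established that $F,\hat F\in\mathcal{P}_d^+$ with the same clustered configuration and $\|\mathbf a\|,\|\hat{\mathbf a}\|\in[m,M]$, and that $\mathcal F[F]-\mathcal F[\hat F]$ is $\leqslant\epsilon$ on $[-\Omega,\Omega]$ while $|P_{\mathbf x,j_1}F-P_{\mathbf x,j_1}\hat F|\gtrsim \tfrac{1}{\Omega}\epsilon(\Omega\tau h)^{-2p+2}$ and $|P_{\mathbf a,j_2}F-P_{\mathbf a,j_2}\hat F|\gtrsim\epsilon(\Omega\tau h)^{-2p+1}$ (and analogously $\gtrsim\epsilon/\Omega$, $\gtrsim\epsilon$ for a non-cluster index), both $F$ and $\hat F$ lie in $E^+_{\epsilon,\Omega}(F)$ (after centering at their midpoint, or simply noting $\hat F\in E^+_{\epsilon,\Omega}(F)$ directly from the definition). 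The definition of $\mathrm{diam}$ then yields the stated lower bounds on $\mathrm{diam}(E^{+,\mathbf x,j}_{\epsilon,\Omega}(F))$ and $\mathrm{diam}(E^{+,\mathbf a,j}_{\epsilon,\Omega}(F))$ immediately. Finally, I would check that the admissible ranges of $\Omega$ and $h$—$h\leqslant C_4'T$ and $\Omega h\leqslant C_5'$—are compatible with those in \cite{batenkov2019super} and with the separation conditions $\tau\leqslant\tfrac1{p-1}$, $\eta<\tfrac1d$ needed to place $p$ nodes of pairwise gaps in $[\tau h,h]$ inside a cluster and the remaining $d-p$ nodes at pairwise distance in $[\eta T,T]$; this is a bookkeeping matter of intersecting constant constraints, not a genuine difficulty.
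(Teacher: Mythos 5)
Your overall reduction coincides with the paper's: reuse the extremal moment-matched pair $F,F_\epsilon$ from \cite[Theorem 6.2]{batenkov2019super} for the cluster, handle the non-cluster nodes by an explicit small shift (the paper takes $a_j'=a_j+\frac{\epsilon}{4(d-p)}$ and $x_j'=x_j+\frac{\epsilon}{8\pi\Omega M(d-p)}$, so positivity there is immediate), control $\mathcal F[F_\epsilon-F]$ on the band via the vanishing moments and Taylor domination, and rescale via Proposition \ref{prop:scaleidentity}. Where you genuinely diverge is the mechanism for the one step that is the paper's advertised novelty: positivity of the perturbed cluster amplitudes. You argue perturbatively, using the entrywise upper bound $\|\mathbf a-\mathbf b\|\leqslant K_4(\Omega h)^{-2p+1}\epsilon\leqslant K_4C_3'$ and shrinking $C_3'$ below $m/(2K_4)$ so that baseline amplitudes bounded below by $m$ stay positive. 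The paper instead proves a structural fact (Theorem \ref{thm:lowerexampleconstru1}): the moment identities $m_k(F)=m_k(F_\epsilon)$, $k=0,\dots,2p-2$, force the nodes of $F$ and $F_\epsilon$ to strictly interlace, and the Lagrange-interpolation identity of Lemma \ref{lem:invervandermonde} then forces the combined amplitude vector (the $a_j$'s and $-b_j$'s ordered by node) to alternate in sign, whence $b_j>0$ regardless of how close $\epsilon$ is to the threshold. Your route is more elementary and suffices for Theorem \ref{thm:diameterlowerbound}; the paper's yields the interlacing structure and positivity without further shrinking the admissible noise level.

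One point you must still close before your version is complete: $\mathcal P_p^+$ requires $b_j\in\mathbb R^+$ exactly, whereas \cite[Theorem 6.2]{batenkov2019super} a priori delivers $\mathbf b\in\mathbb C^p$ (only the nodes are guaranteed real). Your smallness argument shows each $b_j$ lies within $m/2$ of a positive real number, which does not rule out a nonzero imaginary part. You need to first establish that $\mathbf b$ is real --- for instance because, the nodes $\mathbf y$ being real and the matched moments of $F$ being real, the amplitudes solve a real Vandermonde system; or, as in the paper, because identity (\ref{equ:prooflowereq5}) exhibits each combined amplitude as an explicit real multiple of $a_p$. With that supplied, your argument goes through and the remaining steps (membership of $F_\epsilon$ in $E^+_{\epsilon,\Omega}(F)$, the diameter lower bounds, and the bookkeeping of constants) are exactly as in the paper.
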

\begin{proof}
See the proof in Section \ref{section:proofoflowerbound}. 
\end{proof}

Thus combining Theorems \ref{thm:diameterupperbound} and \ref{thm:diameterlowerbound} with Proposition \ref{prop:diameterbound1} now, we can obtain the following theorem for the optimal rates of the minimax errors $\mathcal{E}^{+, \vect x, j}, \mathcal{E}^{+, \vect a, j}$. This generalizes Theorem 2.8 in \cite{batenkov2019super}. 

\begin{thm}\label{thm:minimax1}
Let $m>0, 2 \leqslant p \leqslant d, \tau<\frac{1}{2(p-1)}, \eta<\frac{1}{2 d}, T>0$ be fixed. There exist constants $c_1, c_2, c_3$, depending only on $d, p, m$ such that for all $\frac{c_1}{\eta T} \leqslant \Omega \leqslant \frac{c_2}{h}$ and $\epsilon \leqslant c_3(\Omega \tau h)^{2 p-1}$, the minimax error rates for the set
\begin{align*}
U:=&U(p, d, h, \tau, \eta, T, m)\\
=&\left\{(\mathbf{a}, \mathbf{x}) \in\mathcal {P}_d^+ :  
0<m \leqslant\|\mathbf{a}\| \leq M<\infty,\text{$\mathbf{x}$ forms a $(p, h, T, \tau, \eta)$-clustered configuration} \right\}
\end{align*}
satisfy the following:
\begin{itemize}
	\item For the non-cluster nodes:
$$
\forall j \notin\{\kappa, \ldots, \kappa+p-1\}: \quad\left\{\begin{array}{l}
\mathcal {E}^{+,\mathbf{x}, j}(\epsilon, U, \Omega) \asymp \frac{\epsilon}{\Omega}, \\
\mathcal {E}^{+, \mathbf{a}, j}(\epsilon, U, \Omega) \asymp \epsilon .
\end{array}\right.
$$
\item For the cluster nodes:
$$
\begin{aligned}
\max _{j=\kappa, \ldots, \kappa+p-1}\mathcal {E}^{+, \mathbf{x}, j}(\epsilon, U, \Omega) & \asymp \frac{\epsilon}{\Omega}(\Omega \tau h)^{-2 p+2}, \\
\max _{j=\kappa, \ldots, \kappa+p-1}\mathcal {E}^{+, \mathbf{a}, j}(\epsilon, U, \Omega) & \asymp \epsilon(\Omega \tau h)^{-2 p+1} .
\end{aligned}
$$
\end{itemize}
The proportionality constants in the above statements depend only on $d, p, m, M$.
\end{thm}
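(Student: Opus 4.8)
The plan is to assemble the theorem from the three ingredients already at hand: the upper bound on the diameters of the projected error sets (Theorem~\ref{thm:diameterupperbound}), the matching lower bound realized by a concrete clustered signal (Theorem~\ref{thm:diameterlowerbound}), and the optimal-recovery sandwich relating diameters to minimax errors (Proposition~\ref{prop:diameterbound1}). First I would fix the parameters $m\le M$, $p$, $\tau<\tfrac{1}{2(p-1)}$, $\eta<\tfrac{1}{2d}$, $T$, and choose the constants $c_1,c_2,c_3$ so that the hypotheses of \emph{both} Theorems~\ref{thm:diameterupperbound} and~\ref{thm:diameterlowerbound} are simultaneously in force: the frequency window $\tfrac{C_4}{\eta T}\le\Omega\le\tfrac{C_5}{h}$ (from the upper bound) together with $h\le C_4'T$, $\Omega h\le C_5'$ (from the lower bound) are all implied by $\tfrac{c_1}{\eta T}\le\Omega\le\tfrac{c_2}{h}$ after taking $c_1:=\max(C_4,C_4'^{-1})$-type combinations, and likewise $\epsilon\le c_3(\Omega\tau h)^{2p-1}$ with $c_3:=\min(C_3,C_3')$ validates both the $\epsilon\le C_3(\Omega\tau h)^{2p-1}$ and $\epsilon\le C_3'(\Omega\tau h)^{2p-1}$ constraints. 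The halving/doubling of $\epsilon$ inside Proposition~\ref{prop:diameterbound1} costs only harmless factors of $2$ in these thresholds, which I would absorb into $c_3$.

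For the \textbf{upper bounds}, I would apply the right-hand inequalities of Proposition~\ref{prop:diameterbound1} with $2\epsilon$ in place of $\epsilon$, together with the observation that \emph{every} $F\in U$ satisfies the hypotheses of Theorem~\ref{thm:diameterupperbound} (its $\mathbf{x}$ forms a $(p,h,T,\tau,\eta)$-clustered configuration and $0<m\le\|\mathbf a\|$). This gives
\[
\mathcal{E}^{+,\mathbf{x},j}(\epsilon,U,\Omega)\le\sup_{F\in U}\mathrm{diam}\!\left(E_{2\epsilon,\Omega}^{+,\mathbf{x},j}(F)\right)\le\frac{2C_1}{\Omega}\epsilon\times\begin{cases}(\Omega\tau h)^{-2p+2},&x_j\in\mathbf{x}^c,\\ 1,&x_j\in\mathbf{x}\setminus\mathbf{x}^c,\end{cases}
\]
and the analogous bound for the amplitudes, so the $\lesssim$ directions of all four displayed relations follow with constants depending only on $d,p,m,M$.

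For the \textbf{lower bounds}, I would invoke Theorem~\ref{thm:diameterlowerbound}, which furnishes a particular $F=(\mathbf a,\mathbf x)\in\mathcal P_d^+$ lying in $U$ (it meets the clustered-configuration requirement and $m\le\|\mathbf a\|\le M$) and indices $j_1,j_2\in\{\kappa,\dots,\kappa+p-1\}$ with the stated lower diameter bounds. Feeding this $F$ into the left-hand inequalities of Proposition~\ref{prop:diameterbound1} (with $\tfrac12\epsilon$) requires the side condition $E_{\frac12\epsilon,\Omega}^+(F)\subseteq U$; I would verify this from Theorem~\ref{thm:diameterupperbound}, since the diameter bounds there force every competitor $\hat F\in E_{\frac12\epsilon,\Omega}^+(F)$ to have nodes still satisfying the clustered configuration (for the relaxed constants $\tau'\approx\tau$, $\eta'\approx\eta$ guaranteed by the choice $\tau<\tfrac1{2(p-1)}$, $\eta<\tfrac1{2d}$ — this is exactly why the thresholds in Theorem~\ref{thm:minimax1} are halved relative to Theorem~\ref{thm:diameterupperbound}) and amplitudes bounded below by $m$; the upper bound on $\|\mathbf a\|\le M$ for $\hat F$ follows similarly by enlarging $M$ if necessary, which is why the proportionality constants are allowed to depend on $M$. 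This yields, for $j=j_1$, $\mathcal{E}^{+,\mathbf{x},j_1}(\epsilon,U,\Omega)\ge\tfrac12\mathrm{diam}(E_{\frac12\epsilon,\Omega}^{+,\mathbf{x},j_1}(F))\gtrsim\tfrac{\epsilon}{\Omega}(\Omega\tau h)^{-2p+2}$, hence $\max_{j}\mathcal{E}^{+,\mathbf{x},j}\gtrsim\tfrac{\epsilon}{\Omega}(\Omega\tau h)^{-2p+2}$, and combined with the matching upper bound (taking the max over the cluster) gives the $\asymp$ for the cluster nodes; the amplitude case is identical using $j_2$. For the non-cluster nodes the lower bound at any $j\notin\{\kappa,\dots,\kappa+p-1\}$ gives $\gtrsim\tfrac\epsilon\Omega$ and $\gtrsim\epsilon$ directly, matching the upper bounds, so $\mathcal{E}^{+,\mathbf{x},j}\asymp\tfrac\epsilon\Omega$ and $\mathcal{E}^{+,\mathbf{a},j}\asymp\epsilon$ for each such $j$.

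The routine part is the bookkeeping of constants; the one genuine obstacle is the inclusion $E_{\frac12\epsilon,\Omega}^+(F)\subseteq U$ needed to activate the left half of Proposition~\ref{prop:diameterbound1}. This is where the strict inequalities $\tau<\tfrac1{2(p-1)}$ and $\eta<\tfrac1{2d}$ (rather than $\le\tfrac1{p-1}$, $<\tfrac1d$ as in Theorem~\ref{thm:diameterlowerbound}) are essential: they leave enough slack so that a perturbation of $F$ of size $O(\tfrac\epsilon\Omega(\Omega\tau h)^{-2p+2})$ in the node locations — controlled by Theorem~\ref{thm:diameterupperbound} applied to $F$ itself — still keeps the perturbed nodes inside a $(p,h,T,\tau',\eta')$-clustered configuration with $\tau'\le\tfrac1{p-1}$, $\eta'<\tfrac1d$, and hence inside $U$ after possibly absorbing the at-most-constant blow-up of $\|\mathbf a\|$ into $M$. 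Once this containment is established, the theorem is immediate. $\qedblack$
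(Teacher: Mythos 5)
Your overall architecture --- sandwiching the minimax errors between the diameter bounds of Theorems \ref{thm:diameterupperbound} and \ref{thm:diameterlowerbound} via Proposition \ref{prop:diameterbound1}, with the only nontrivial point being the containment $E_{\frac{1}{2}\epsilon,\Omega}^{+}(F)\subseteq U$ --- is exactly the paper's, and you correctly identify that containment as the crux. However, your execution of that step has the relaxation running in the wrong direction, and as written it fails. You take the witness $F$ from Theorem \ref{thm:diameterlowerbound} applied with the \emph{original} parameters $(\tau,\eta,T)$, so $F$ may sit on the boundary of $U$: indeed the construction in Section \ref{section:proofoflowerbound} places adjacent cluster nodes at separation exactly $\tau h$, the minimum permitted by Definition \ref{defi:clusterconfig}. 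Then a competitor $\hat F\in E_{\frac{1}{2}\epsilon,\Omega}^{+}(F)$ whose cluster separation drops below $\tau h$ is not a $(p,h,T,\tau,\eta)$-clustered configuration, hence not in $U$, no matter how small the perturbation. Your claim that the perturbed nodes form a $(p,h,T,\tau',\eta')$-clustered configuration ``with $\tau'\le\frac{1}{p-1}$, $\eta'<\frac{1}{d}$, and hence lie inside $U$'' is a non sequitur: membership in $U$ requires the configuration with the specific $(\tau,\eta)$ defining $U$, and a configuration with a \emph{smaller} $\tau'$ need not be one with $\tau$.

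The fix --- and what the paper actually does --- is to invoke Theorem \ref{thm:diameterlowerbound} with the \emph{tightened} parameters $\tau'=2\tau$, $\eta'=2\eta$, $T'=0.99\,T$ and $m<m'<M'<M$ (admissible precisely because $\tau<\frac{1}{2(p-1)}$ and $\eta<\frac{1}{2d}$), producing a witness in a subset $U'\subset U$ whose boundary is at distance $\gtrsim\tau h$ (cluster nodes), $\gtrsim\eta T$ (non-cluster nodes) and $\gtrsim 1$ (amplitudes) from $\partial U$; one then checks via Theorem \ref{thm:diameterupperbound} and the hypothesis $\epsilon\le c_3(\Omega\tau h)^{2p-1}$, with $c_3$ chosen small enough, that the diameters of the projections of $E_{\frac{1}{2}\epsilon,\Omega}^{+}(F')$ fall below these margins for every $F'\in U'$, so that $U'\subseteq U_\epsilon$. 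Applying Theorem \ref{thm:diameterlowerbound} with $\tau'=2\tau$ only changes the lower bounds by the harmless factor $2^{-2p+2}$ (resp. $2^{-2p+1}$). With this reversal of the direction of the relaxation your argument becomes the paper's proof; without it the left-hand inequality of Proposition \ref{prop:diameterbound1} cannot be activated for the witness you chose.
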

\begin{proof}
The proof is the same as the one for Theorem 2.8 in \cite{batenkov2019super}. Here, we present the details for the convenience of reading and completeness. 
Let $C_3, C_3^{\prime}, C_4, C_4^{\prime}, C_5, C_5^{\prime}$ be the constants from Theorems \ref{thm:diameterupperbound} and \ref{thm:diameterlowerbound}. Let $c_1=C_4$ and $c_2=\min \left(C_5, C_5^{\prime}, C_4 C_4^{\prime}\right)$. Let $\frac{c_1}{\eta T} \leqslant \Omega \leqslant \frac{c_2}{h}$, and $\epsilon \leqslant c_3(\Omega \tau h)^{2 p-1}$, where $c_3 \leqslant \min \left(C_3, C_3^{\prime}\right)$ will be determined below. It can be verified that $\Omega, h$ and $\epsilon$ as above satisfy the conditions of both Theorems \ref{thm:diameterupperbound} and \ref{thm:diameterlowerbound}.

\medskip
\noindent \textbf{Upper bound} Directly follows from the upper bounds in Theorem \ref{thm:diameterupperbound} and Proposition \ref{prop:diameterbound1}. 

\medskip
\noindent \textbf{Lower bound} Denote $U_\epsilon=\left\{F \in U: E_{\frac{1}{2} \epsilon, \Omega}^+(F) \subseteq U\right\}$. In order to prove the lower bounds on $\mathcal {E}^{+, \mathbf{x}, j}$ and $\mathcal E^{+, \mathbf{a}, j}$, by Proposition \ref{prop:diameterbound1} it suffices to show that there exists an $F \in U_\epsilon \neq \emptyset$ such that the conclusions of Theorem \ref{thm:diameterlowerbound} are satisfied for this $F$. Note that the set $U$ has a non-empty interior. Furthermore, one can choose $m^{\prime}, M^{\prime}$ satisfying $m<m^{\prime}<M^{\prime}<M$, and also $T^{\prime}=0.99 T, \tau^{\prime}=2 \tau$ and $\eta^{\prime}=2 \eta$, such that
\[
U^{\prime}=U\left(p, d, h, \tau^{\prime}, \eta^{\prime}, T^{\prime}, m^{\prime}, M^{\prime}\right) \subset U, \quad \partial U^{\prime} \cap \partial U=\emptyset.
\]
By the construction of $U^{\prime}$, there exist positive constants $\widetilde{C}_1, \widetilde{C}_2$, independent of $\Omega, h$ and $\tau, \eta$, such that
\begin{equation}\label{equ:proofminimax1}
\begin{aligned}
&\inf _{u \in \partial U, u^{\prime} \in \partial U^{\prime}}\left|P_{\mathbf{x}, j}(u)-P_{\mathbf{x}, j}\left(u^{\prime}\right)\right| \geqslant \tilde{C}_1 \times \begin{cases}\tau h, & x_j \in \mathbf{x}^c, \\ \eta T, & x_j \in \mathbf{x} \backslash \mathbf{x}^c ;\end{cases}\\
&\inf _{u \in \partial U, u^{\prime} \in \partial U^{\prime}}\left|P_{\mathbf{a}, j}(u)-P_{\mathbf{a}, j}\left(u^{\prime}\right)\right| \geqslant \tilde{C}_2.
\end{aligned}
\end{equation}
Now, we use the fact that $\epsilon<c_3(\Omega \tau h)^{2 p-1}$. Applying Theorem \ref{thm:diameterupperbound} to an arbitrary positive signal $F^{\prime} \in U^{\prime}$ and using the conditions $\frac{1}{\Omega} \leqslant \frac{\eta T}{c_1}$ and $\Omega \tau h \leqslant \Omega h \leqslant c_2$, we obtain that
\begin{equation}\label{equ:proofminimax2}
\begin{aligned}
\operatorname{diam}\left(E_{\frac{1}{2} \epsilon}^{+, \mathbf{x}, j}\left(F^{\prime}\right)\right) & \leqslant \begin{cases}\frac{C_1 c_3}{2} \tau h, & x_j \in \mathbf{x}^c, \\
\frac{C_1 c_3}{2 \Omega}(\Omega \tau h)^{2 p-1} \leqslant \frac{C_1 c_3}{2 c_1} c_2^{2 p-1} \eta T, & x_j \in \mathbf{x} \backslash \mathbf{x}^c ;\end{cases} \\
\operatorname{diam}\left(E_{\frac{1}{2} \epsilon}^{+, \mathbf{a}, j}\left(F^{\prime}\right)\right) & \leqslant \begin{cases}\frac{C_2 c_3}{2}, & x_j \in \mathbf{x}^c, \\
\frac{C_2 c_3}{2} c_2^{2 p-1}, & x_j \in \mathbf{x} \backslash \mathbf{x}^c .\end{cases}
\end{aligned}
\end{equation}
Next, we set $c_3=\min \left(C_3, C_3^{\prime}, C_3^{\prime \prime}\right)$, where
$$
C_3^{\prime \prime}=\min \left(1, c_1\right) \times \min \left(1, c_2^{-2 p+1}\right) \times \min \left(\frac{2 \tilde{C}_1}{C_1}, \frac{2 \tilde{C}_2}{C_2}\right) .
$$
Combining (\ref{equ:proofminimax1}) and (\ref{equ:proofminimax2}), we obtain that $ E_{\frac{1}{2} \epsilon, \Omega}^+(F^{\prime}) \subseteq U$, i.e. $F^{\prime} \in U_\epsilon$. Since $F^{\prime} \in U^{\prime}$ is arbitrary, we conclude that $U^{\prime} \subseteq U_\epsilon$. Since clearly $U^{\prime} \neq \emptyset$, applying Proposition \ref{prop:diameterbound1} and Theorem \ref{thm:diameterlowerbound} finishes the proof.
\end{proof}

\section{Numerical optimality of Matrix Pencil method (MP method)}
Theorem \ref{thm:minimax1} establishes the optimal error rate for super-resolving the locations and amplitudes of positive sources. In this section, we demonstrate by numerical experiments the optimal performance of MP method in recovering the locations of positive sources. Note that the numerical experiments in \cite{batenkov2019super} have already demonstrated the optimal performance of Matrix Pencil method for resolving general sources. Here we conduct experiments similar to those in \cite{batenkov2019super} but focusing on the case of resolving positive sources. 


\subsection{Review of the MP method}
In this section, we review the MP method . We assume that the noisy Fourier data of the signal $F$ is given by
\[
\vect Y(\omega) = \sum_{j=1}^{d} a_j e^{-2\pi i x_j \omega}+ \epsilon(\omega), \quad \omega \in [-\Omega, \Omega]. 
\] 
The measurements are usually taken at $N$ evenly spaced points, $\omega_1=-\Omega, \omega_2=-\Omega+h,  \cdots, \omega_{N}=\Omega$ with $h$ being the spacing.  From the measurement 
\begin{equation}\label{equ:discretemeasurement}
	\mathbf Y =(\mathbf Y(\omega_1),\mathbf Y(\omega_2),\cdots,\mathbf Y(\omega_N))^\top
	\end{equation}
and $\hat N=\lfloor \frac{N-1}{2}\rfloor$, we assemble the $(\hat N +1)\times(\hat N +1)$ Hankel matrix
\begin{equation}\label{equ:hankelmatrix1}\vect H=\begin{pmatrix}
		\mathbf Y(\omega_1)&\mathbf Y(\omega_2)&\cdots& \mathbf Y(\omega_{\hat N})\\
		\mathbf Y(\omega_2)&\mathbf Y(\omega_3)&\cdots&\mathbf Y(\omega_{\hat N+1})\\
		\cdots&\cdots&\ddots&\cdots\\
		\mathbf Y(\omega_{\hat N})&\mathbf Y(\omega_{\hat N+1})&\cdots&\mathbf Y(\omega_{2\hat N+1})
	\end{pmatrix}.
\end{equation}

Let $\vect H_u:= \vect H[1:\hat N, :]$ (and $H_l := H[2:\hat N+1, :]$) be the $\hat N\times (\hat N+1)$ matrix obtained from the Hankel matrix $\vect H$ given by (\ref{equ:hankelmatrix1}) by selecting the first $\hat N$ rows (respectively, the second to the $(\hat N+1)$-th rows). It turns out that, in the noiseless case,
$e^{-2\pi i x_j h}, 1\leq j \leq n$,  are exactly the nonzero generalized eigenvalues of the
pencil $\vect H_{l}-z\vect H_{u}$. In the noisy case, when the sources are well-separated, each of the first $n$ nonzero generalized eigenvalues of the
pencil $\vect H_{l}-z\vect H_{u}$ is close to $e^{-2\pi i  x_j h}$ for some $j$. We summarize the Matrix Pencil method in \textbf{Algorithm \ref{algo:onedmpsupport}}. 

\begin{algorithm}[!h]
	\caption{\textbf{The Matrix Pencil algorithm}}\label{algo:onedmpsupport}	
	\textbf{Input:} Source number $d$, measurement: $\mathbf Y$ in (\ref{equ:discretemeasurement});\\
	1: Let $\hat N= \lfloor \frac{N-1}{2}\rfloor$. Formulate the $(\hat N+1)\times(\hat N+1)$ Hankel matrix $\mathbf H$ given by (\ref{equ:hankelmatrix1}) from $\mathbf{Y}$, and the matrices $\vect H_u, \vect H_l$\;
	2: Compute the truncated Singular Value Decomposition (SVD) of $\vect H_u$, $\vect H_l$ of order $d$:
	\[
	\vect H_u=U_1\Sigma_1V_1^*,\quad \vect H_l = U_2\Sigma_2V_2^*,
	\]
	where $U_1, U_2, V_1, V_2$ are $\hat N\times d$ matrices and $\Sigma_1, \Sigma_2$ are $d \times d$\ matrices\;
	3: Generate the reduced pencil
	\[
	\mathbf {\hat H}_u =  U_2^* U_1\Sigma_1V_1^* V_2,\quad  \mathbf {\hat H}_l= \Sigma_2,
	\]
	where $\mathbf {\hat H}_u$, $\mathbf {\hat H}_l$ are $d\times d$ matrices\;
	4: Compute the generalized eigenvalues $\{\hat z_j\}$ of the reduced pencil $(\mathbf {\hat H}_u, \mathbf {\hat H}_l)$, and put
	$\{\hat x_j\}=\{\arg(\hat z_j)\}, j=1, \cdots, n$, where the $\arg(z)$ is the argument of $z$\;
	5: Solving the linear least squares problem
	\[
	\mathbf{\hat b} =\arg \min _{\mathbf{b} \in \mathbb{C}^d}\bnorm{\vect Y-V \mathbf{b}}_2,
	\]
	where $V$ is the Vandermonde matrix $V=\left(e^{-2 \pi i \hat{x}_j \omega_k}\right)_{k=1, \cdots, N}^{j=1, \cdots, d}$\;
	6:  Compute $\hat{a}_j$ by $|\mathbf{\hat b}_j|$\; 
	\textbf{Return:} The estimated $\hat x_j$'s and $\hat a_j$'s.
\end{algorithm}

\subsection{Numerical experiments}
We conduct 1000 random experiments (the randomness was in the choice of $a_j, x_j, \epsilon$) to examine the error amplification in the recovery of the nodes and amplitudes. In particular, we consider recovering $p=2$ cluster nodes and $1$ non-cluster node and their corresponding amplitudes. Each single experiment is summarized in \textbf{Algorithm \ref{algo:singleexperiment}}.  The results are shown in Figure \ref{fig:mplocaamplitude} and we observe that the error amplification is consistent with what we have predicted,  i.e., the error amplification factors for resolving nodes and amplitudes are respectively $\mathrm{SRF}^{2p-2}$ and $\mathrm{SRF}^{2p-1}$ for the cluster nodes with size $p$. Moreover, for resolving non-cluster nodes, both the corresponding error amplification factors are bounded by a small constant.

\begin{algorithm}[H]
	\caption{\textbf{A single experiment}}\label{algo:singleexperiment}	
\textbf{Input:} $p, d, N, \epsilon$;\\
1: Construct the signal $F$ with $p$ closely-spaced sources and $d-p$ non-clustered sources\;
2: Generate the measurement $\vect Y$ defined by (\ref{equ:discretemeasurement}) with $\epsilon$ being the noise level\;
3: Execute the MP method (\textbf{Algorithm \ref{algo:onedmpsupport}}) and obtain $F_{M P}=\left(\mathbf{a}^{MP}, \mathbf{x}^{MP}\right)$. The nodes in $\mathbf{x}^{MP}$ are ordered in an increasing manner\;
4: \For{each $j$}{
Compute the error for node $j$ :
$$
e_j=\left|\mathbf{x}_j^{MP}-\mathbf{x}_{j}\right|. 
$$
The success for node $j$ is defined as
$$
\operatorname{Succ}_j=\left(e_j<\frac{\min _{\ell \neq j}\left|\mathbf{x}_{\ell}-\mathbf{x}_j\right|}{3}\right) .
$$
 \If{Succ $_j==$ true} { 
 Compute normalized node error amplification factor
$$
\mathcal{K}_{\mathbf{x}, j}=\frac{\left|\mathbf{x}_j-\mathbf{x}_{j}^{MP}\right| \cdot \Omega}{\epsilon};
$$
Compute normalized amplitude error amplification factor
$$
\mathcal{K}_{\mathbf{a}, j}=\frac{\left|\mathbf{a}_j-\mathbf{a}_{j}^{MP}\right|}{\epsilon} ;
$$}}
\textbf{Return:} $\left(\mathcal{K}_{\mathbf{x}, j}, \mathcal{K}_{\mathbf{a}, j}, \operatorname{Succ}_j\right)$ for each node $j=1, \ldots, d$.
\end{algorithm}

\begin{figure}[!h]
	\centering
	\begin{subfigure}[b]{0.48\textwidth}
		\centering
		\includegraphics[width=\textwidth]{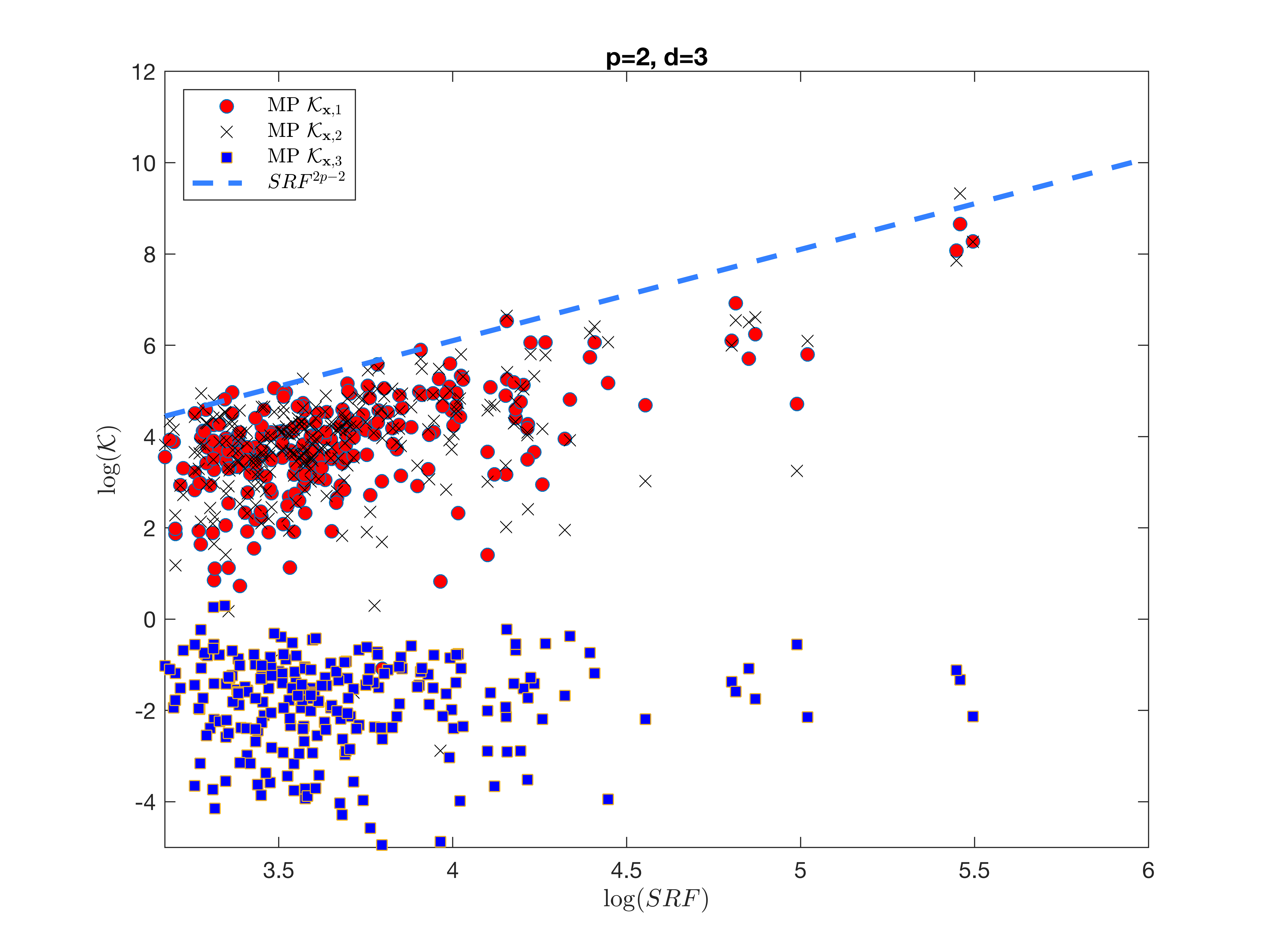}
		\caption{}
	\end{subfigure}
	\begin{subfigure}[b]{0.48\textwidth}
		\centering
		\includegraphics[width=\textwidth]{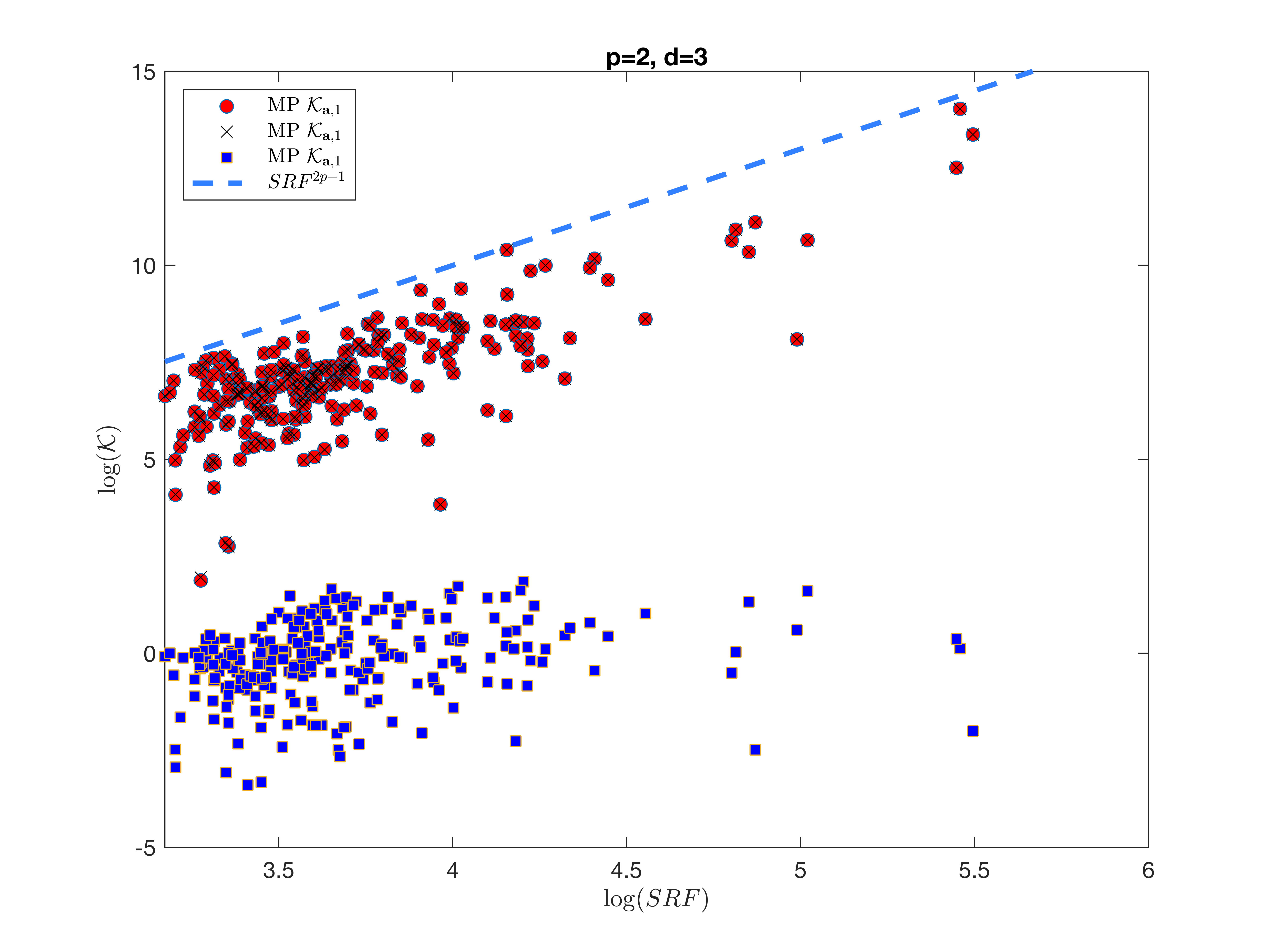}
		\caption{}
	\end{subfigure}
	\caption{The error amplification factors. For the cluster nodes, the error amplification factors $\mathcal{K}_{\vect x, j}, \mathcal{K}_{\vect a, j}$ scale like $\mathrm{SRF}^{2p-2}, \mathrm{SRF}^{2p-1}$, respectively. For the non-cluster nodes, both error amplification factors are bounded by a small constant. }
	\label{fig:mplocaamplitude}
\end{figure}

\section{Proof of Theorem \ref{thm:diameterlowerbound}}\label{section:proofoflowerbound}

\subsection{Normalization}
Similarly to \cite{batenkov2019super}, for ease of exposition, we should normalize the cluster configuration in some of the following discussions. Let us first define the scale transformation on $\mathcal{P}_d^+$.
\begin{defi}\label{defi:scaling}
For $F=\sum_{j=1}^d a_j \delta\left(x-x_j\right) \in \mathcal{P}_d^+$ and $T>0$, we define $S C_T: \mathcal{P}_d^+ \rightarrow \mathcal{P}_d^+$ as follows:
$$
S C_T(F)(x)=\sum_{j=1}^d a_j \delta\left(x-\frac{x_j}{T}\right).
$$
\end{defi}
By the scale property of the Fourier transform, we have that for any $\epsilon>0$,
\[
SC_T\left(E_{\epsilon, \Omega}^+(F)\right)=E_{\epsilon, \Omega T}^+\left(S C_T(F)\right) .
\]
Thus the following proposition holds. 
\begin{prop}\label{prop:scaleidentity}
Let $F=(\mathbf{a}, \mathbf{x}) \in \mathcal{P}_d^+$ and $T>0$. Then, for any $\epsilon>0$ and $1 \leqslant j \leqslant d$, we have
\begin{align*}
&\operatorname{diam}\left(E_{\epsilon, \Omega}^{+,\mathbf{x}, j}(F)\right)=T \operatorname{diam}\left(E_{\epsilon, \Omega T}^{+,\mathbf{x}, j}\left(S C_T\left(F\right)\right)\right),\\
&\operatorname{diam}\left(E_{\epsilon, \Omega}^{+,\mathbf{a}, j}(F)\right)=\operatorname{diam}\left(E_{\epsilon, \Omega T}^{+,\mathbf{a}, j}\left(S C_T\left(F\right)\right)\right). 
\end{align*}
\end{prop}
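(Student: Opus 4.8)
The plan is to reduce the two claimed equalities to the set identity $SC_T\big(E_{\epsilon,\Omega}^{+}(F)\big)=E_{\epsilon,\Omega T}^{+}\big(SC_T(F)\big)$ already recorded above, combined with the trivial way in which the coordinate projections $P_{\mathbf{x},j}$ and $P_{\mathbf{a},j}$ interact with $SC_T$, and the elementary scaling behaviour of $\operatorname{diam}$.

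First I would observe that $SC_T$ is a bijection of $\mathcal{P}_d^{+}$ onto itself: multiplying every node by $1/T>0$ keeps the amplitudes positive and preserves the strict ordering $x_1<\dots<x_d$, so the $j$-th (smallest) node of $SC_T(\hat F)$ is exactly $1/T$ times the $j$-th node of $\hat F$. Hence for every $\hat F=(\hat{\mathbf a},\hat{\mathbf x})\in\mathcal{P}_d^{+}$ we have $P_{\mathbf{x},j}\big(SC_T(\hat F)\big)=\tfrac{1}{T}P_{\mathbf{x},j}(\hat F)$ and $P_{\mathbf{a},j}\big(SC_T(\hat F)\big)=P_{\mathbf{a},j}(\hat F)$. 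Applying $P_{\mathbf{x},j}$ to the set identity gives
\[
E_{\epsilon,\Omega T}^{+,\mathbf{x},j}\big(SC_T(F)\big)=P_{\mathbf{x},j}\Big(SC_T\big(E_{\epsilon,\Omega}^{+}(F)\big)\Big)=\tfrac{1}{T}\,P_{\mathbf{x},j}\big(E_{\epsilon,\Omega}^{+}(F)\big)=\tfrac{1}{T}\,E_{\epsilon,\Omega}^{+,\mathbf{x},j}(F),
\]
and likewise $E_{\epsilon,\Omega T}^{+,\mathbf{a},j}\big(SC_T(F)\big)=E_{\epsilon,\Omega}^{+,\mathbf{a},j}(F)$.

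Finally I would invoke the fact that for a subset $V$ of a normed space and a scalar $c$ one has $\operatorname{diam}(cV)=|c|\operatorname{diam}(V)$; taking $c=1/T$ in the node identity and $c=1$ in the amplitude identity yields the two equalities of the proposition. There is essentially no obstacle here: the only points requiring a line of care are (i) that the Fourier scaling convention indeed sends $[-\Omega,\Omega]$-bandlimited data for $F$ to $[-\Omega T,\Omega T]$-bandlimited data for $SC_T(F)$, which is precisely what makes the set identity hold as stated, and (ii) that $SC_T$ respects the node ordering so that $P_{\mathbf{x},j}$ and $P_{\mathbf{a},j}$ commute with it coordinatewise as above.
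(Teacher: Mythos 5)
Your proof is correct and takes essentially the same route as the paper, which derives the proposition directly from the set identity $SC_T\bigl(E_{\epsilon,\Omega}^{+}(F)\bigr)=E_{\epsilon,\Omega T}^{+}\bigl(SC_T(F)\bigr)$; you have simply filled in the (routine) details about how the coordinate projections and the diameter scale under $SC_T$.
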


\subsection{Auxiliary lemmas}
In this subsection, we introduce some notation and lemmas that are used in the following proofs. Set 
\begin{equation}\label{equ:phiformula}
	\phi_{s}(t) = \left(1, t, \cdots, t^s\right)^{\top}.
\end{equation}
\begin{lem}{\label{lem:invervandermonde}}
	Let $t_1, \cdots, t_k$ be $k$ different real numbers and let $t$ be a real number. We have
	\[
	\left(D_k(k-1)^{-1}\phi_{k-1}(t)\right)_{j}=\Pi_{1\leq q\leq k,q\neq j}\frac{t- t_q}{t_j- t_q},
	\]
	where $D_k(k-1):=  \big(\phi_{k-1}(t_1),\cdots,\phi_{k-1}(t_k)\big)$ with $\phi_{k-1}(\cdot)$ defined by (\ref{equ:phiformula}). 
\end{lem}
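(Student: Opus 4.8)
# Proof Proposal for Lemma 1.4

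The plan is to recognize the right-hand side as the classical Lagrange interpolation basis polynomials and to show that multiplication by $D_k(k-1)^{-1}$ applied to $\phi_{k-1}(t)$ produces exactly their coefficient-to-value correspondence. First I would observe that the matrix $D_k(k-1) = \big(\phi_{k-1}(t_1),\cdots,\phi_{k-1}(t_k)\big)$ is precisely the (transposed) Vandermonde matrix built from the nodes $t_1,\dots,t_k$: its $\ell$-th column is $(1,t_\ell,t_\ell^2,\dots,t_\ell^{k-1})^\top$. Since the $t_q$ are pairwise distinct, this matrix is invertible, so the expression $D_k(k-1)^{-1}\phi_{k-1}(t)$ is well-defined.

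Next I would argue as follows. For each $j$, define the Lagrange polynomial
\[
L_j(t) = \prod_{1\leq q \leq k,\, q\neq j} \frac{t-t_q}{t_j-t_q},
\]
which is a polynomial of degree at most $k-1$ in $t$, hence can be written as $L_j(t) = \sum_{i=0}^{k-1} c_{ji} t^i = \mathbf{c}_j^\top \phi_{k-1}(t)$ for a coefficient vector $\mathbf{c}_j \in \mathbb{R}^k$. Stacking these, let $C$ be the $k\times k$ matrix whose $j$-th row is $\mathbf{c}_j^\top$; then $\big(C\,\phi_{k-1}(t)\big)_j = L_j(t)$ for all real $t$. The defining property $L_j(t_\ell) = \delta_{j\ell}$ (Kronecker delta) then reads $C\,\phi_{k-1}(t_\ell) = \mathbf{e}_j$ evaluated appropriately, i.e. $C\,D_k(k-1) = C\,\big(\phi_{k-1}(t_1),\dots,\phi_{k-1}(t_k)\big) = I_k$, the identity matrix. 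Hence $C = D_k(k-1)^{-1}$, and therefore $\big(D_k(k-1)^{-1}\phi_{k-1}(t)\big)_j = L_j(t)$, which is exactly the claimed identity.

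Alternatively, and perhaps more cleanly, I would phrase the whole argument in one line: the vector $\mathbf{v}(t) := D_k(k-1)^{-1}\phi_{k-1}(t)$ is characterized by the linear system $D_k(k-1)\,\mathbf{v}(t) = \phi_{k-1}(t)$, i.e. $\sum_{j=1}^k v_j(t)\,\phi_{k-1}(t_j) = \phi_{k-1}(t)$; reading this componentwise gives $\sum_{j=1}^k v_j(t)\, t_j^i = t^i$ for $i=0,\dots,k-1$, which says that $\sum_j v_j(t) L(t_j) = L(t)$ for every polynomial $L$ of degree $\leq k-1$ — i.e. the $v_j(t)$ are the unique Lagrange interpolation weights at the point $t$, and uniqueness of interpolation (again using distinctness of the $t_q$) forces $v_j(t) = L_j(t) = \prod_{q\neq j}\frac{t-t_q}{t_j-t_q}$.

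I do not anticipate a genuine obstacle here; this is a standard fact about Vandermonde inverses and Lagrange interpolation. The only point requiring a word of care is the bookkeeping of which Vandermonde convention is in force (rows versus columns indexed by nodes) so that the inverse acts on the correct side; once $D_k(k-1)$ is identified with the matrix having $\phi_{k-1}(t_j)$ as its $j$-th column, the identity $C\,D_k(k-1) = I_k$ with $C$ the matrix of Lagrange coefficients is immediate from $L_j(t_\ell)=\delta_{j\ell}$, and the proof concludes.
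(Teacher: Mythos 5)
Your proof is correct. The paper itself does not argue this lemma at all: it simply cites it as Lemma 5 of an earlier work (\cite{liu2021theorylse}), so any self-contained derivation is necessarily a different route. Both of your arguments are sound: identifying $D_k(k-1)$ as the Vandermonde matrix with columns $\phi_{k-1}(t_j)$, and then either (a) showing that the matrix $C$ of Lagrange coefficient rows satisfies $C\,D_k(k-1)=I_k$ because $L_j(t_\ell)=\delta_{j\ell}$, hence $C=D_k(k-1)^{-1}$, or (b) characterizing $\mathbf v(t)=D_k(k-1)^{-1}\phi_{k-1}(t)$ by $\sum_j v_j(t)\,t_j^i=t^i$ for $i=0,\dots,k-1$ and invoking uniqueness of degree-$(k-1)$ interpolation to conclude $v_j(t)=L_j(t)$. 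The second version is the cleaner of the two (one can simply test the identity $\sum_j v_j(t)P(t_j)=P(t)$ against $P=L_\ell$ to read off $v_\ell(t)=L_\ell(t)$ directly, without appealing to uniqueness in the abstract). The only bookkeeping slip is the phrase ``$C\,\phi_{k-1}(t_\ell)=\mathbf e_j$ evaluated appropriately''; what is true is $C\,\phi_{k-1}(t_\ell)=\mathbf e_\ell$, which is exactly the column-by-column statement of $C\,D_k(k-1)=I_k$ that you write immediately afterwards, so nothing is lost. What your approach buys is a short, elementary, self-contained justification in place of an external citation; what the paper's choice buys is brevity and consistency with the notation of the cited reference.
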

\begin{proof}
	This is \cite[Lemma 5]{liu2021theorylse}. 
\end{proof}

The following proposition is the main result for proving Theorem \ref{thm:diameterlowerbound} and we present its detailed proof in Section \ref{section:proofproplower}. 
\begin{prop}\label{prop:diameterlowerestimate1}
Let $F=(\mathbf{a}, \mathbf{x}) \in \mathcal{P}_d^+$, such that $\mathbf{x}$ forms a $(p, h, 1, \tau, \eta)$-clustered configuration, with cluster nodes $\mathbf{x}^c=\left(x_1, \ldots, x_p\right)$ (according to Definition \ref{defi:clusterconfig}), and with $\mathbf{a} \in (\mathbb{R}^+)^d$ satisfying $m \leqslant\|\mathbf{a}\| \leqslant M$.
Then, there exist constants $c_1, k_1, k_2, k_3, k_4$, depending only on $(d, p, \tau, m, M)$, such that for all $\epsilon<c_1(\Omega h)^{2 p-1}$ and $\Omega h \leqslant 2$, there exists a signal $F_\epsilon \in \mathcal{P}_d^+$ satisfying, for some $j_1, j_2 \in\{1, \ldots, p\},$
\begin{align*}
&\left|P_{\mathbf{x}, j_1}\left(F_\epsilon\right)-P_{\mathbf{x}, j_1}(F)\right| \geqslant \frac{k_1}{\Omega}(\Omega h)^{-2 p+2} \epsilon,\\
&\left|P_{\mathbf{a}, j_2}\left(F_\epsilon\right)-P_{\mathbf{a}, j_2}(F)\right| \geqslant k_2(\Omega h)^{-2 p+1} \epsilon,\\
&\left|P_{\mathbf{x}, j}\left(F_\epsilon\right)-P_{\mathbf{x}, j}(F)\right| \geqslant \frac{k_3}{\Omega} \epsilon,\quad  x_j \in \mathbf{x} \backslash \mathbf{x}^c, \\
&\left|P_{\mathbf{a}, j}\left(F_\epsilon\right)-P_{\mathbf{a}, j}(F)\right| \geqslant k_4 \epsilon,\quad  x_j \in \mathbf{x} \backslash \mathbf{x}^c, \\
&\left|\mathcal{F}\left(F_\epsilon\right)(s)-\mathcal{F}(F)(s)\right| \leqslant \epsilon, \quad|s| \leqslant \Omega .
\end{align*}
\end{prop}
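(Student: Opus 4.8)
The plan is to reduce the positive case to the complex case treated in \cite{batenkov2019super}, by reexamining the explicit construction used there and verifying that it can be carried out with positive amplitudes. First I would recall the structure of the lower-bound construction for complex signals: one fixes a base signal $F=(\mathbf a,\mathbf x)$ with $\mathbf x$ forming a normalized $(p,h,1,\tau,\eta)$-clustered configuration and then builds a perturbed signal $F_\epsilon$ whose Fourier data differs from that of $F$ by at most $\epsilon$ on $[-\Omega,\Omega]$, yet whose cluster nodes (resp.\ amplitudes) are moved by at least $\frac{k_1}{\Omega}(\Omega h)^{-2p+2}\epsilon$ (resp.\ $k_2(\Omega h)^{-2p+1}\epsilon$) for some index $j_1$ (resp.\ $j_2$), while the non-cluster parameters are each moved by order $\frac{\epsilon}{\Omega}$ and $\epsilon$ respectively. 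The perturbation on the cluster is essentially a rescaled discrete divided-difference of order $2p-1$ in the cluster variables, whose Fourier amplitude is suppressed by the factor $(\Omega h)^{2p-1}$ by Taylor expansion; this is where Lemma \ref{lem:invervandermonde} enters, controlling the inverse-Vandermonde coefficients that appear when one writes the moved nodes/amplitudes in terms of the small perturbation. The non-cluster perturbations are handled separately and independently by the usual single-node/single-amplitude trick, since those nodes are well separated.

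The key new observation — the "main novelty" advertised in the introduction — is that in this construction one may take the base signal $F$ to have all amplitudes bounded below by $m'>0$ and above by $M'<\infty$ strictly inside $(m,M)$, and take $\epsilon$ small enough (namely $\epsilon<c_1(\Omega h)^{2p-1}$ with $c_1$ depending on $d,p,\tau,m,M$) that the resulting perturbed amplitudes in $F_\epsilon$ remain strictly positive. Concretely, I would argue as follows: the construction perturbs each amplitude $a_j\mapsto a_j+\delta a_j$ where, for the cluster amplitudes, $|\delta a_j|\le k_2(\Omega h)^{-2p+1}\epsilon\le k_2 c_1$; choosing $c_1$ so that $k_2 c_1<\frac{m'}{2}$ (and likewise for the non-cluster amplitudes) guarantees $a_j+\delta a_j\ge \frac{m'}{2}>0$ for every $j$. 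One must also check the perturbed nodes remain real, distinct, and ordered — this is automatic once $\epsilon$ is small enough relative to $\tau h$, since the node displacements are $O(\tau h)$ at most. Hence $F_\epsilon\in\mathcal P_d^+$, and all five displayed inequalities are inherited verbatim from the complex construction because they are statements purely about $F$ and $F_\epsilon$ that hold regardless of which ambient parameter space one records them in.

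The main obstacle, and the place requiring genuine care rather than bookkeeping, is ensuring that the \emph{sign} of the amplitude perturbation $\delta a_j$ does not force $a_j+\delta a_j$ negative for the \emph{optimal} choice of base signal: in the complex construction one is free to rotate phases, but here we must instead fix the base amplitudes and the orientation of the perturbation so that, even in the worst case over the $2p-1$-th order divided difference, the dominant amplitude shift at the distinguished index $j_2$ can be taken with a sign (or magnitude bound) consistent with positivity. This is resolved exactly as in \cite{liu2022mathematicalpositive}: one picks the base configuration to be (approximately) the uniform cluster, where the inverse-Vandermonde coefficients from Lemma \ref{lem:invervandermonde} are explicitly controlled, so that the perturbation magnitude at $j_1,j_2$ is bounded below by a constant times $\epsilon(\Omega h)^{-2p+1}$ uniformly, and then the positivity constraint is satisfied merely by shrinking $c_1$. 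After these verifications, the proposition follows by collecting the constants $c_1,k_1,k_2,k_3,k_4$, all of which depend only on $(d,p,\tau,m,M)$ as claimed. The detailed execution — the precise form of the perturbation and the divided-difference estimates — is deferred to Section \ref{section:proofproplower}.
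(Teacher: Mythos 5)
Your overall architecture matches the paper's: split $F$ into cluster and non-cluster parts, perturb the non-cluster nodes and amplitudes directly by $O(\epsilon/\Omega)$ and $O(\epsilon)$, rescale the cluster by $\Omega$, invoke the moment-matching construction of \cite{batenkov2019super} (match $m_0,\dots,m_{2p-2}$, shift $m_{2p-1}$ by $\tilde\epsilon$), and control $|\mathcal{F}[H]|$ on $|\omega|\le 1$ by Taylor domination. The gap is in the one step that is genuinely new here: why the perturbed cluster signal $F^c_\epsilon=(\mathbf{b},\mathbf{y})$ is \emph{positive}. You propose to force positivity by shrinking $c_1$ so that $|b_j-a_j|\le K_4(\Omega h)^{-2p+1}\tilde\epsilon\le K_4c_1$ is below half the smallest amplitude. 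That only works if each individual $a_j$ is bounded below by a fixed constant; the proposition's hypothesis is $m\le\|\mathbf{a}\|$, a bound on the norm (the paper's convention is $\|\mathbf{a}\|_\infty=\max_j a_j$), so an individual $a_j$ may be arbitrarily small and your argument cannot exclude $b_j<0$. Your appeal to Lemma \ref{lem:invervandermonde} is also aimed at the wrong target: a lower bound on the \emph{magnitude} of the amplitude perturbation at $j_1,j_2$ says nothing about the \emph{sign} of the new amplitudes, and "shrinking $c_1$" is again the perturbative argument.

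The paper's proof (Theorem \ref{thm:lowerexampleconstru1}) shows that positivity is automatic rather than perturbative: writing $H=F^c_{(\Omega),\epsilon}-F^c_{(\Omega)}$ as a combination supported on the $2p$ points $\{x_j\}\cup\{y_q\}$ with coefficients $a_j$ and $-b_q$, the vanishing of $m_0,\dots,m_{2p-2}$ together with Lemma \ref{lem:invervandermonde} forces consecutive coefficients, in the sorted order of the $2p$ points, to alternate in sign. Two adjacent $x$-points would carry two positive coefficients, so the $x$'s and $y$'s must strictly interlace, and the alternation then forces every $-b_q$ to be negative, i.e.\ $b_q>0$, with no smallness assumption on $\epsilon$ relative to the $a_j$'s. (A separate Vandermonde rank argument, via \cite[Theorem 3.12]{liu2021mathematicaloned}, handles the degenerate case $y_{q^*}=x_{j^*}$, which you do not address.) This structural observation is the advertised novelty of the paper; without it, your proposal does not establish $F_\epsilon\in\mathcal{P}_d^+$ in the stated generality.
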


\subsection{Proof of Theorem \ref{thm:diameterlowerbound}}
\begin{proof}
After employing Proposition \ref{prop:diameterlowerestimate1}, the arguments for proving Theorem \ref{thm:diameterlowerbound} is just the same as those in \cite{batenkov2019super}. We present the details as follows. 
 
Let $\mathbf{a} \in (\mathbb{R}^+)^d$ be any positive amplitude vector satisfying $m \leqslant\|\mathbf{a}\| \leqslant M$. Let $\Omega$, $h$ satisfy $\Omega h \leqslant 2$, and choose nodes $\mathbf{x}$ satisfying that
$$
\mathbf{x}^c=\left(x_1=0, x_1=\tau h, \ldots, x_p=(p-1) \tau h\right),
$$
and the rest of the non-cluster nodes are equally spaced in $((p-1) \tau h, 1)$. Now, let $h^{\prime}=(p-1) \tau h$ and $\tau^{\prime}=\frac{1}{p-1}$. Clearly, $\mathbf{x}$ is a $\left(p, h^{\prime}, 1, \tau^{\prime}, \eta\right)$-clustered configuration for all sufficiently small $h$ (for instance $h<\frac{1}{d}<1-\eta(d-p+1)$).We now can apply Proposition \ref{prop:diameterlowerestimate1} to the signal $F=(\mathbf{a}, \mathbf{x})$. It then follows that for $\epsilon<c_1(p-1)^{2 p-1}(\Omega \tau h)^{2 p-1}$ and $\Omega h<\frac{2}{(p-1) \tau}$, there exist $j_1, j_2 \in\{1, \ldots, p\}$ such that
\[
\begin{aligned}
	\operatorname{diam}\left(E_{\epsilon, \Omega}^{+,\mathbf{x}, j_1}(F)\right) & \geqslant \frac{k_1}{\Omega}(p-1)^{-2 p+2} \epsilon(\Omega \tau h)^{-2 p+2}, \\
	\operatorname{diam}\left(E_{\epsilon, \Omega}^{+,\mathbf{a}, j_2}(F)\right) & \geqslant k_2 \epsilon(p-1)^{-2 p+1}(\Omega \tau h)^{-2 p+1} .
\end{aligned}
\]
Moreover, 
\[
\begin{aligned}
	\operatorname{diam}\left(E_{\epsilon, \Omega}^{+,\mathbf{x}, j}(F)\right) & \geqslant \frac{k_3}{\Omega} \epsilon, \quad x_j \in \mathbf{x} \backslash \mathbf{x}^c, \\
	\operatorname{diam}\left(E_{\epsilon, \Omega}^{+,\mathbf{a}, j}(F)\right) & \geqslant k_4 \epsilon, \quad x_j \in \mathbf{x} \backslash \mathbf{x}^c.
\end{aligned}
\]

For the general case that $F=(\vect a, \vect x)\in \mathcal P_{d}^+$ such that $\mathbf{x}$ forms a $(p, h, T, \tau, \eta)$-clustered configuration, we consider $SC_{T}(F)=(\vect a, \mathbf{\tilde{x}})$, $\mathbf{\tilde{x}}=(\tilde{x}_1, \cdots, \tilde{x}_d)$, where $\tilde{x}_j=\frac{x_j}{T}, j=1, \cdots, d$. Now the node vector $\mathbf{\tilde{x}}$ forms a $(p, \frac{h}{T}, 1, \tau, \eta)$-clustered configuration. Applying Proposition \ref{prop:scaleidentity} and the above results, we obtain that
\[
\begin{aligned}
	&\operatorname{diam}\left(E_{\epsilon, \Omega}^{+,\mathbf{x}, j_1}(F)\right)= T\operatorname{diam}\left(E_{\epsilon, \Omega T}^{+,\mathbf{x}, j_1}(SC_T(F))\right)  \geqslant \frac{k_1}{\Omega}(p-1)^{-2 p+2} \epsilon(\Omega \tau h)^{-2 p+2}, \\
	&\operatorname{diam}\left(E_{\epsilon, \Omega}^{+,\mathbf{a}, j_2}(F)\right)= \operatorname{diam}\left(E_{\epsilon, \Omega T}^{+,\mathbf{a}, j_2}(SC_T(F))\right)  \geqslant k_2 \epsilon(p-1)^{-2 p+1}(\Omega \tau h)^{-2 p+1},\\
	&\operatorname{diam}\left(E_{\epsilon, \Omega}^{+,\mathbf{x}, j}(F)\right)= T\operatorname{diam}\left(E_{\epsilon, \Omega T}^{+,\mathbf{x}, j}(SC_T(F))\right)  \geqslant \frac{k_3}{\Omega} \epsilon, \quad x_j \in \mathbf{x} \backslash \mathbf{x}^c, \\
	&\operatorname{diam}\left(E_{\epsilon, \Omega}^{+,\mathbf{a}, j}(F)\right)= \operatorname{diam}\left(E_{\epsilon, \Omega T}^{+,\mathbf{a}, j}(SC_T(F))\right) \geqslant k_4 \epsilon, \quad x_j \in \mathbf{x} \backslash \mathbf{x}^c.
\end{aligned}
\]

This finishes the proof of Theorem $2.7$ with $C_1^{\prime}=\max \left(\frac{k_1}{(p-1)^{2 p-2}}, k_3\right)$, $C_2^{\prime}=\max \left(k_4, \frac{k_2}{(p-1)^{2 p-1}}\right), C_3^{\prime}=c_1(p-1)^{2 p-1}, C_4^{\prime}=\frac{1}{d}$ and $C_5^{\prime}=2$.

\subsection{Proof of Proposition \ref{prop:diameterlowerestimate1}}\label{section:proofproplower}
We separate the proof into three steps. \\
\textbf{Step 1.}\\
In this step we prove the following theorem. 
\begin{thm}\label{thm:lowerexampleconstru1}
Given the parameters $0<h \leqslant 2,0<\tau \leqslant 1,0<m \leqslant M<\infty$, let the signal $F=(\mathbf{a}, \mathbf{x}) \in \mathcal{P}_p^+$ with $\mathbf{a} \in (\mathbb{R}^+)^p$ form a single uniform cluster as follows:
\begin{itemize}
\item (centered) $x_p=-x_1$;
\item (uniform) for $1 \leqslant j<k \leqslant p$ we have
$$
\tau h \leqslant\left|x_j-x_k\right| \leqslant h ;
$$
\item $m \leqslant\left\|a_j\right\| \leqslant M$.
\end{itemize}
Then, there exist constants $K_1, \ldots, K_5$ depending only on $(d, \tau, m, M)$ such that for every $\epsilon<K_5 h^{2 d-1}$, there exists a signal $F_\epsilon=(\mathbf{b}, \mathbf{y}) \in \mathcal{P}_p^+$ satisfying the following conditions:
\begin{enumerate}
\item[(i)] $m_k(F)=m_k\left(F_\epsilon\right)$ for $k=0,1, \ldots, 2p-2$, where 
\begin{equation}\label{equ:defiofmk}
	m_k(F):=\sum_{j=1}^p a_j x_j^k;
	\end{equation}
\item[(ii)] $m_{2p-1}\left(F_\epsilon\right)=m_{2p-1}(F)+\epsilon$;
\item[(iii)] $K_1 h^{-2p+2} \epsilon \leqslant\|\mathbf{x}-\mathbf{y}\| \leqslant K_2 h^{-2p+2} \epsilon$;
\item[(iv)] $K_3 h^{-2p+1} \epsilon \leqslant\|\mathbf{a} - \mathbf{b}\| \leqslant K_4 h^{-2p+1} \epsilon$.
\end{enumerate}
\end{thm}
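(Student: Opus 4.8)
The plan is to obtain $F_\epsilon$ by the moment-perturbation construction of \cite{batenkov2019super} and then to add the one new observation, namely that positivity of the amplitudes is not destroyed by the perturbation. Conditions (i)--(ii) require exactly that $F_\epsilon=(\mathbf b,\mathbf y)\in\mathcal P_p^+$ have the same power moments $m_0,\dots,m_{2p-2}$ as $F$ (in the sense of (\ref{equ:defiofmk})) while $m_{2p-1}(F_\epsilon)=m_{2p-1}(F)+\epsilon$. I would realize $F_\epsilon$ by perturbing the node polynomial $q(x)=\prod_{j=1}^p(x-x_j)$ of $F$ to $q_t(x)=q(x)+t\,w(x)$, where $w$ is the degree-$(p-1)$ polynomial orthogonal to $\{1,x,\dots,x^{p-2}\}$ with respect to the inner product $\langle f,g\rangle=\sum_j a_j f(x_j)g(x_j)$, whose Gram matrix on the monomial basis is the Hankel matrix $H=(m_{i+k})_{0\le i,k\le p-1}$; thus $w$ is the degree-$(p-1)$ orthogonal polynomial of the discrete measure $\sum_j a_j\delta_{x_j}$, unique up to scaling. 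Along the direction $w$ the moments $m_0,\dots,m_{2p-2}$ stay fixed while, by the Prony relation, $m_{2p-1}$ varies affinely in $t$ with slope $\pm\det H$. Positivity of the $a_j$ is used here for the first time: it makes $H$ positive definite, so $\det H>0$, and by Cauchy--Binet $\det H=\bigl(\prod_j a_j\bigr)\prod_{j<k}(x_j-x_k)^2$ is controlled from above and below in terms of $m,M,\tau,h$; hence we may set $t=\mp\epsilon/\det H$, with $|t|\asymp|\det H|^{-1}\epsilon$.

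The second step is to convert $q_t$ back into a signal. Because $q$ has $p$ simple real roots with pairwise gaps of order $h$ and $w$ is bounded on the pertinent interval, the hypothesis $\epsilon<K_5h^{2p-1}$ makes $|t|$ small enough that $q_t$ still has $p$ simple, ordered, real roots $\mathbf y$, with $\|\mathbf y-\mathbf x\|_\infty$ as small relative to $\tau h$ as desired; the first-order displacement $y_j-x_j=-t\,w(x_j)/q'(x_j)+O(t^2)$ combined with $|q'(x_j)|\asymp h^{p-1}$ and the Gauss-quadrature identity $a_j\,q'(x_j)\,w(x_j)=\langle w,w\rangle$ (up to normalization of $w$) yields the two-sided bound $\|\mathbf y-\mathbf x\|\asymp h^{-2p+2}\epsilon$, i.e. (iii). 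The amplitudes $\mathbf b$ are then forced: they are the unique weights for which $(\mathbf b,\mathbf y)$ reproduces the prescribed first $p$ moments, obtained by the Vandermonde solve whose inverse is given by Lemma \ref{lem:invervandermonde}; a first-order expansion in $\mathbf y-\mathbf x$, using the same conditioning estimates, gives $\|\mathbf b-\mathbf a\|\asymp h^{-2p+1}\epsilon$, which is (iv). The admissible indices $j_1,j_2\in\{1,\dots,p\}$ are picked as coordinates in which, respectively, the node-displacement and the amplitude-displacement are bounded below; at least one such coordinate exists in each block, since a displacement supported in a single block (all nodes fixed, or all amplitudes fixed) cannot keep $m_0,\dots,m_{2p-2}$ fixed while moving $m_{2p-1}$.

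The new point is positivity of $\mathbf b$ together with membership $F_\epsilon\in\mathcal P_p^+$. By (iv), $\|\mathbf b-\mathbf a\|_\infty\le K_4h^{-2p+1}\epsilon\le K_4K_5$, so taking $K_5<m/K_4$ forces every $b_j\ge m-K_4K_5>0$ (each amplitude of $F$ being at least $m$); equivalently, the moment tuple $(m_0,\dots,m_{2p-2},m_{2p-1}+\epsilon)$ with $H\succ0$ still lies in the open range of $(2p-1)$-st moments admitting a strictly positive $p$-atomic representation, and that representation is $F_\epsilon$. Similarly, by (iii), $\|\mathbf y-\mathbf x\|_\infty\le K_2h^{-2p+2}\epsilon\le K_2K_5h$, so for $K_5$ small the nodes $\mathbf y$ remain distinct and ordered, placing $F_\epsilon$ in $\mathcal P_p^+$. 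One finally fixes $K_5$ to satisfy finitely many smallness constraints and reads $K_1,\dots,K_4$ off the two-sided estimates.

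I expect the main obstacle to be the quantitative, two-sided conditioning bounds that make the whole argument uniform in $h$: simultaneously controlling $|w(x_j)|$, $|q'(x_j)|$, $\det H$, and the inverse Vandermonde of Lemma \ref{lem:invervandermonde} with the correct powers of $h$ and constants depending only on $p,\tau,m,M$, and checking that the first-order perturbation analysis of the roots of $q_t$ and of the Vandermonde solve stays valid all the way up to $\epsilon\asymp h^{2p-1}$. Once those estimates are in hand, the positivity verification --- the actual novelty relative to \cite{batenkov2019super} --- is short.
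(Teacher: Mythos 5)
Your proposal is correct in substance, but its two halves relate quite differently to what the paper actually does. For the construction of $F_\epsilon$ and the two-sided bounds (i)--(iv) in the signed class $\mathcal{P}_p$, the paper gives no proof at all: it simply invokes Theorem 6.2 of \cite{batenkov2019super}. So your orthogonal-polynomial perturbation of the Prony polynomial $q+tw$, with $\det H=\bigl(\prod_j a_j\bigr)\prod_{j<k}(x_j-x_k)^2$ controlling the slope of $m_{2p-1}$, is a re-derivation of the cited result rather than of anything proved here (and, as you yourself flag, the uniform conditioning estimates are where all the work in that re-derivation would sit). The genuinely new step --- positivity of $\mathbf{b}$ --- is where you diverge from the paper. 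You obtain it from the quantitative bound (iv): shrink $K_5$ so that $\|\mathbf{b}-\mathbf{a}\|_\infty\leqslant K_4K_5<m$, hence $b_j>0$. The paper instead proves a purely structural fact using only condition (i): writing the difference signal on the $2p$ merged nodes $t_1<\dots<t_{2p}$ and applying the Lagrange/inverse-Vandermonde formula of Lemma \ref{lem:invervandermonde} to the vanishing of $m_0,\dots,m_{2p-2}$, the coefficients $\alpha_j$ must alternate in sign; two adjacent nodes both coming from $\mathbf{x}$ would force two of the (positive) $a_j$'s to have opposite signs, so the $x_j$'s and $y_j$'s interlace, and the alternation then forces every $b_j>0$. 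The paper's argument needs no smallness of $\epsilon$ and no lower bound on individual amplitudes (only $a_j>0$), so it shows that \emph{any} signed $p$-atomic signal matching the first $2p-1$ moments of a positive one is automatically positive; your argument is shorter but requires each $a_j\geqslant m$ individually (not merely $\|\mathbf{a}\|\geqslant m$) and a further shrinking of $K_5$ --- both harmless in the downstream application, where the amplitudes are chosen freely. Either route closes the gap.
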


\begin{proof}
The case for $F=(\vect a, \vect x)\in \mathcal P_{p}$ and $F_{\epsilon} =(\vect b, \vect y)\in \mathcal P_p$ is the Theorem 6.2 in \cite{batenkov2019super}.  Now we prove that for the case when $\vect a \in (\mathbb{R}^+)^p$, from condition (i) in the theorem we actually have $\vect b \in (\mathbb{R}^+)^p$.  This is enough to prove the theorem. 

Let $b_j$'s be elements in $\vect b$ and $y_j$'s be elements in $\vect Y$. We first consider the case when $x_{j^*} = y_{q^*}$ for certain $j^*,q^*$. If $b_{q^*} \neq a_{j^*}$, then condition (i) in the theorem yields that 
\begin{equation}\label{equ:prooflowereq0}
B\beta = A\alpha, 
\end{equation}
where $\alpha =(a_1, \cdots,\ a_{j^*-1}, \ a_{j^*}-b_{q^*},\ a_{j^*+1}, \cdots, a_{p})^{\top}$, $\beta = (b_1, \cdots, b_{q^*-1}, b_{q^*+1}, \cdots, b_{p})$ and 
\[
A=\big(\phi_{2p-2}(x_1),\cdots,\phi_{2p-2}(x_{p})\big), \ B=\big(\phi_{2p-2}(y_1),\cdots,\phi_{2p-2}(y_{q^*-1}), \phi_{2p-2}(y_{q^*+1}),\cdots,  \phi_{2p-2}(y_{p})\big),
\]
with $\phi_{2p-2}(\cdot)$ being defined by (\ref{equ:phiformula}). Since all the elements in $\alpha$ are nonzero by $b_{q^*} \neq a_{j^*}$ and $a_j>0, 1\leq j\leq p$, $A$ contains $p$ different Vandermonde vectors ($\phi_{2p-2}(\cdot)$), and $B$ contains at most $p-1$ Vandermonde vectors, it is impossible to have (\ref{equ:prooflowereq0}) by \cite[Theorem 3.12]{liu2021mathematicaloned}. Thus, we must have $b_{q^*}= a_{j^*}>0$ for $x_{j^*} =  y_{q^*}$. Next, we prove that $b_{q}>0$ for those $y_q$ that are not equal to any of the $x_j$'s. Without loss of generality, we can actually assume that all the $y_q$'s are not equal to any of the $x_j$'s.  Further, since $(\vect b, \vect y)\in \mathcal P_{p}$ and $(\vect a, \vect x)\in \mathcal P_{p}^+$, all the $y_q$'s and $x_{j}$'s are distinct from each other.  We now claim that 
\begin{equation}\label{equ:prooflowereq3}
	x_1< y_1< x_2< y_2<\cdots< x_p< y_p, \  \text{ or }\ 	y_1< x_1< y_2< x_2<\cdots< y_p< x_p.
\end{equation}
We denote the $x_j, y_j$'s from left to right by $t_1<t_2<\cdots < t_{2p}$ and the corresponding $a_j , -b_j$'s by $\alpha_1, \cdots, \alpha_{2p}$.  By condition (i), it follows that 
\[
A\alpha=0, 
\]
where $\alpha =(\alpha_1, \cdots, \alpha_{2p})^{\top}$ and $A=\big(\phi_{2p-2}(t_1),\cdots,\phi_{2p-2}(t_{2p})\big)$ with $\phi_{2p-2}(\cdot)$ being defined by (\ref{equ:phiformula}). Thus we can have 
\[
-\alpha_{2p}\phi_{2p-2}(t_{2p}) = \big(\phi_{2p-2}(t_{1}), \cdots, \phi_{2p-2}(t_{2p-1})\big)(\alpha_{1},\cdots, \alpha_{2p-1})^{\top},
\]
and hence
\begin{equation}\label{equ:prooflowereq4}
	-\alpha_{2p} \left(\phi_{2p-2}(t_{1}), \cdots, \phi_{2p-2}(t_{2p-1})\right)^{-1}\phi_{2p-2}(t_{2p}) = (\alpha_{1},\cdots, \alpha_{2p-1})^{\top}.
\end{equation}
If claim (\ref{equ:prooflowereq3}) does not hold, we have $t_{q}=x_{j_{q}}$ and $t_{q+1}= x_{j_{q}+1}$ for certain $q, j_{q}$. Applying Lemma \ref{lem:invervandermonde} to (\ref{equ:prooflowereq4}) and considering the $q$-th and $(q+1)$-th elements in the vectors, we have 
\begin{align}
	&-\alpha_{2p}\Pi_{1\leq j\leq 2p-1, j\neq q}\frac{t_{2p}-t_{j}}{t_{q}-t_{j}}= \alpha_{q}, \label{equ:prooflowereq6} \\
	&-\alpha_{2p}\Pi_{1\leq j\leq 2p-1, j\neq q+1}\frac{t_{2p}-t_{j}}{t_{q+1}-t_{j}}= \alpha_{q+1}. \label{equ:prooflowereq7}
\end{align}
Observe first that, for $1\leq k\leq 2p-1$, $\Pi_{1\leq j\leq 2p-1, j\neq k}(t_{2p}-t_j)$ is always positive. Moreover, it is obvious that $\Pi_{1\leq j\leq 2p-1, j\neq q}(t_{q}-t_j)$ and $\Pi_{1\leq j\leq 2p-1, j\neq q+1}(t_{q+1}-t_j)$ have different signs in (\ref{equ:prooflowereq6}) and (\ref{equ:prooflowereq7}), respectively. It follows that $\alpha_{q}$ and $\alpha_{q+1}$ are of different signs. But the $\alpha_q$ and $\alpha_{q+1}$ are amplitudes of positive sources located at respectively $x_q$ and $x_{q+1}$, which yields a contradiction. Thus the case that $t_{q}= x_{j_{q}}$ and $t_{q+1}= x_{j_{q}+1}$ for certain $q, j_{q}$ will not happen and the claim (\ref{equ:prooflowereq3}) is thus proved.

Suppose 
\begin{equation}\label{equ:prooflowereq4.5}
y_1< x_1< y_2< x_2<\cdots< y_p< x_p, 
\end{equation}
we now prove that the $b_j$'s are all positive. The another case can be demonstrated in the same way as below. By this setting, 
\[
t_{2j-1} = y_j, \quad t_{2j} = x_j, \quad  j=1,\cdots, p.  
\]
Since for $j =1, \cdots, 2p-1$, we have 
\begin{align}\label{equ:prooflowereq5}
-\alpha_{2p}\Pi_{1\leq k\leq 2p-1, k\neq j}\frac{t_{2p}-t_{k}}{t_{j}-t_{k}}= \alpha_{j}. 
\end{align}
For $1\leq j\leq 2p-1$, $\Pi_{1\leq k\leq 2p-1, k\neq j}(t_{2p}-t_k)$ is always positive. For $j=2p-1$, since $\alpha_{2p}=a_p>0$, $-\alpha_{2p} \Pi_{1\leq k\leq 2p-1, k\neq 2p-1}(t_{2p-1}-t_k)$ is negative in (\ref{equ:prooflowereq5}). Thus we have $\alpha_{2p-1}<0$. In the same fashion, we see that $\alpha_{j}>0$ for even $j$ and $\alpha_j<0$ for odd $j$. Note that by the setting (\ref{equ:prooflowereq4.5}), $\alpha_{2j-1} = -b_j, \ j=1, \cdots, p$. This proves that $F_{\epsilon}$ is actually a positive signal and completes the proof of Theorem \ref{thm:lowerexampleconstru1}. 
\end{proof}

\textbf{Step 2.} Now we start to prove Proposition \ref{prop:diameterlowerestimate1}. It is similar to the proof in \cite{batenkov2019super}. Define $F^c$ and $F^{n c}$ to be the cluster and the non-cluster parts of $F\in \mathcal P_{d}^+$, respectively, i.e.,
\[
\begin{aligned}
	F^c &=\sum_{x_j \in \mathbf{x}^c} a_j \delta\left(x-x_j\right), \\
	F^{n c} &=\sum_{x_j \in \mathbf{x} \backslash \mathbf{x}^c} a_j \delta\left(x-x_j\right) .
\end{aligned}
\]

We first analyze the non-cluster nodes.  We construct that 
\[
F_{\epsilon}^{nc} = \sum_{x_j \in \mathbf{x} \backslash \mathbf{x}^c} a_j' \delta\left(x-x_j'\right)
\]
where $a_j^{\prime}=a_j+\frac{\epsilon}{4(d-p)}$ and $x_j^{\prime}=x_j+\frac{\epsilon}{8 \pi \Omega M (d-p)}$. For $|s| \leqslant \Omega$, the difference between the Fourier transforms of $F_{\epsilon}^{nc}$ and $F^{nc}$ satisfies
\begin{equation}\label{equ:nonclusterapprox}
\begin{aligned}
	\left|\mathcal{F}[F_{\epsilon}^{nc}](s)-\mathcal{F}[F^{nc}](s)\right| &\leq \sum_{x_j \in \mathbf{x} \backslash \mathbf{x}^c}\left|a_j e^{-2 \pi i x_j s}-a_j^{\prime} e^{-2 \pi i x_j^{\prime} s}\right| \\
	& \leqslant \sum_{x_j \in \mathbf{x} \backslash \mathbf{x}^c}\left(\left|a_j e^{-2 \pi i x_j s}\left(1-e^{-2 \pi i \frac{\epsilon}{8 \pi \Omega M(d-p)} s}\right)\right|+ \frac{\epsilon}{4(d-p)}\right) \\
	& \leqslant \frac{\epsilon}{4}+\frac{\epsilon}{4}=\frac{\epsilon}{2}.
\end{aligned}
\end{equation}
Note that $a_j'>0$ since $a_j^{\prime}=a_j+\frac{\epsilon}{4(d-p)}$ and $F_{\epsilon}^{nc}$ is a positive signal. 

We next analyze the cluster nodes. We suppose that $x_1+x_p=0$. For the case when $x_1+x_p\neq 0$, utilizing decomposition
\[
\sum_{j=1}^{p} a_j e^{-2\pi i x_js} = \sum_{j=1}^{p} a_j e^{-2\pi i \frac{x_1+x_p}{2} s} e^{-2\pi i \left(x_j - \frac{x_1+x_p}{2}\right) s},
\]
we nearly transfer the problem to the case when $x_1+x_p=0$ and it is not difficult to see that the following arguments hold as well with only a small modification, which is enough to prove the proposition. 

Next, we define a blowup of $F^c$ by $\Omega$ as 
\[
F_{(\Omega)}^c=S C_{\frac{1}{\Omega}}\left(F^c\right)=\sum_{x_j \in \mathrm{x}^c} a_j \delta\left(x-\Omega x_j\right) 
\]
where $SC$ is defined by Definition \ref{defi:scaling}. Let $\tilde{h}=\Omega h$ and $c_1=K_5(p, \tau, m, M)$ as in Theorem \ref{thm:lowerexampleconstru1}. Let $\epsilon \leqslant c_1(\Omega h)^{2 p-1}$. Now, we apply Theorem \ref{thm:lowerexampleconstru1} with parameters $p, \tilde{h}, \tau, m, M, \tilde{\epsilon}=c_2 \epsilon$ and the signal $F_{(\Omega)}^c$, where $c_2 \leqslant 1$ will be determined below. We can obtain a signal $F_{(\Omega), \epsilon}^c \in \mathcal P_{p}^+$ such that the following hold for the difference of signals $H=F_{(\Omega), \epsilon}^c-F_{(\Omega)}^c$ :
\begin{equation}\label{equ:prooflowereq10}
m_k(H)=0, \quad k=0,1, \ldots, 2 p-2,\quad  m_{2 p-1}(H)=\tilde{\epsilon} ;
\end{equation}
while also, for some $j_1, j_2 \in\{1, \ldots, p\}$
\begin{equation} \label{equ:prooflowereq9}
\begin{aligned}
&\left|P_{\mathbf{x}, j_1}\left(F_{(\Omega), \epsilon}^c\right)-P_{\mathbf{x}, j_1}\left(F_{(\Omega)}^c\right)\right| \geqslant K_1(\Omega h)^{-2 p+2} \tilde{\epsilon},\\
&\left|P_{\mathbf{x}, j}\left(F_{(\Omega), \epsilon}^c\right)-P_{\mathbf{x}, j}\left(F_{(\Omega)}^c\right)\right| \leqslant K_2(\Omega h)^{-2 p+2} \tilde{\epsilon}, \quad j=1, \ldots, p,\\
&\left|P_{\mathbf{a} j_2}\left(F_{(\Omega), \epsilon}^c\right)-P_{\mathbf{a}, j_2}\left(F_{(\Omega)}^c\right)\right| \geqslant K_3(\Omega h)^{-2 p+1} \tilde{\epsilon}.
\end{aligned}
\end{equation}

Now, considering 
\[
F_{\epsilon}^c=S C_{\Omega}\left(F_{(\Omega), \epsilon}^c\right) , 
\]
we obtain that
\begin{align*}
&\left|P_{\mathbf{x}, j_1}\left(F_{\epsilon}^c\right)-P_{\mathbf{x}, j_1}\left(F^c\right)\right| \geqslant \frac{K_1}{\Omega}(\Omega h)^{-2 p+2} \tilde{\epsilon},\\
&\left|P_{\mathbf{a}, j_2}\left(F_{\epsilon}^c\right)-P_{\mathbf{a}, j_2}\left(F^c\right)\right| \geqslant K_3(\Omega h)^{-2 p+1} \tilde{\epsilon}. 
\end{align*}
From the above definitions, we have $H_{\Omega}=S C_{\Omega}(H)=F_{\epsilon}^c-F^c$. We next show that there is a choice of $c_2$ such that
\begin{equation}\label{equ:prooflowereq11}
\left|\mathcal{F}[H_{\Omega}](s)\right| \leqslant \frac{\epsilon}{2}, \quad|s| \leqslant \Omega .
\end{equation}
Put $\omega=s / \Omega$. Then, by $\mathcal{F}\left[H_{\Omega}\right](s)=\mathcal{F}[H](\omega)$, the above inequality is equivalent to 
\begin{equation}\label{equ:prooflowereq8}
|\mathcal{F}[H](\omega)|\leq \frac{\epsilon}{2}, \quad |\omega|\leq 1.  
\end{equation}

\textbf{Step 3}. In this step we prove that (\ref{equ:prooflowereq8}) holds for a choice of $c_2$. Note that we have the following Taylor expansion of $\mathcal F[H](\omega)$:
\begin{equation}\label{equ:taylorexpan1}
\mathcal{F}[H](\omega)=\sum_{k=0}^{\infty} \frac{1}{k !} m_k(H)(-2 \pi i \omega)^k .
\end{equation}
Next we apply the following Taylor domination property \cite[Theorem 6.3]{batenkov2019super}.

\begin{thm}\label{thm:taylordominant1}
Let $H=\sum_{j=1}^{2p} \beta_j \delta\left(x-t_j\right)$, and put $R=\min _{j=1, \ldots, 2 p}\left|t_j\right|^{-1}>0$. Then, for all $k \geqslant 2 p$, we have the so-called Taylor domination property
\[
\left|m_k(H)\right| R^k \leqslant\left(\frac{2 e k}{2 p}\right)^{2 p} \max _{\ell=0,1, \ldots, 2 p-1}\left|m_{\ell}(H)\right| R^{\ell} .
\]
\end{thm}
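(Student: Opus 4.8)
The plan is to translate the statement into a question about a rational generating function whose poles are confined outside a disk of radius $R$, and then extract the Taylor coefficients by a Cauchy integral. Set $g(z):=\sum_{k\geqslant 0} m_k(H)\,z^k$. Since $m_k(H)=\sum_{j=1}^{2p}\beta_j t_j^{\,k}$, summing the geometric series termwise gives $g(z)=\sum_{j=1}^{2p}\frac{\beta_j}{1-t_j z}$, which is analytic on $|z|<R$ because each pole $t_j^{-1}$ satisfies $|t_j^{-1}|\geqslant R$. Rescaling the nodes by $R$ (this replaces $t_j$ by $Rt_j$ and hence $m_k(H)$ by $m_k(H)R^k$, while the hypothesis $R=\min_j|t_j|^{-1}$ becomes $\min_j|Rt_j|^{-1}=1$) reduces everything to the case $R=1$, i.e.\ $|t_j|\leqslant 1$ for all $j$ and $g$ analytic on the open unit disk; the inequality to prove then reads $|m_k(H)|\leqslant\bigl(\tfrac{2ek}{2p}\bigr)^{2p}M$ for all $k\geqslant 2p$, where $M:=\max_{0\leqslant\ell\leqslant 2p-1}|m_\ell(H)|$.

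Next I would exhibit $g$ as $Q/P$ with $P(z)=\prod_{j=1}^{2p}(1-t_jz)=\sum_{i=0}^{2p}(-1)^i e_i z^i$, where $e_i=e_i(t_1,\dots,t_{2p})$ is the $i$-th elementary symmetric polynomial, and $Q(z):=P(z)g(z)$. The coefficient of $z^n$ in $P(z)g(z)$ for $n\geqslant 2p$ equals $\sum_j\beta_j t_j^{\,n-2p}\,\tilde P(t_j)$ with $\tilde P(z)=\prod_j(z-t_j)$, hence vanishes; so $Q$ is a polynomial of degree at most $2p-1$ (this is just the linear recurrence obeyed by $(m_k)$). Two elementary bounds then drive the estimate: since $|t_j|\leqslant 1$ we have $|e_i|\leqslant\binom{2p}{i}$, so the coefficients $q_n=\sum_{i=0}^n(-1)^i e_i m_{n-i}$ of $Q$ are controlled by $M$ up to a factor that is an absolute constant raised to the power $2p$; and on the circle $|z|=r$, $0<r<1$, one has $|P(z)|\geqslant\prod_{j=1}^{2p}(1-|t_j|r)\geqslant(1-r)^{2p}$.

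I would then write $m_k(H)=\frac{1}{2\pi i}\oint_{|z|=r}\frac{g(z)}{z^{k+1}}\,dz$, so that $|m_k(H)|\leqslant r^{-k}\sup_{|z|=r}\frac{|Q(z)|}{|P(z)|}\leqslant r^{-k}(1-r)^{-2p}\cdot(\text{abs.\ const})^{2p}\,M$. Choosing the radius to balance the two factors, namely $r=\tfrac{k}{k+2p}$ (so $1-r=\tfrac{2p}{k+2p}$, and $r\in(\tfrac12,1)$ because $k\geqslant 2p$), produces the Beta-function-type quantity $r^{-k}(1-r)^{-2p}=\tfrac{(k+2p)^{k+2p}}{k^k(2p)^{2p}}$, and the elementary inequalities $(1+\tfrac{2p}{k})^k\leqslant e^{2p}$ together with $k+2p\leqslant 2k$ collapse it to at most $e^{2p}(k/p)^{2p}=\bigl(\tfrac{2ek}{2p}\bigr)^{2p}$. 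This yields a bound of the asserted shape.

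The structural part of the argument is short; the real work is constant-chasing. The crude estimate $|e_i|\leqslant\binom{2p}{i}$ loses a factor $2^{2p}$ (and there is a further lower-order factor from summing the coefficients of $Q$ on the circle), so to land on exactly the constant $\bigl(\tfrac{2ek}{2p}\bigr)^{2p}$ one must estimate $|Q(z)|$ on $|z|=r$ more carefully, or optimize $r$ with the extra polynomial factor built in, or simply invoke the sharp Taylor-domination estimates from the Batenkov--Yomdin theory, of which this theorem is a special case. I would present the contour computation in full and cite those refinements for the precise constant. Two degeneracies should be noted in passing and cause no trouble: if fewer than $2p$ of the $t_j$ are distinct then $P$ has smaller degree and every inequality above still holds; and if some $t_j=0$, that node contributes only to $m_0$ and does not affect $R$.
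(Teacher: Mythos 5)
The paper does not prove this statement at all: it is quoted verbatim as Theorem 6.3 of \cite{batenkov2019super}, whose own proof rests on a Tur\'an-type first main theorem for exponential sums (equivalently, the linear recurrence $m_k=\sum_{i=1}^{2p}(-1)^{i+1}e_i m_{k-i}$ handled by the sharp Tur\'an--Nazarov machinery, which is where the specific constant $\bigl(\tfrac{2ek}{2p}\bigr)^{2p}$ comes from). Your route --- identifying $\sum_k m_k z^k$ with the rational function $Q/P$, $P(z)=\prod_j(1-t_jz)$, $\deg Q\leqslant 2p-1$, and running a Cauchy estimate on $|z|=r$ with the optimized radius $r=\tfrac{k}{k+2p}$ --- is a genuinely different, self-contained and elementary argument, and every step of it checks out: the vanishing of the coefficients of $Q$ in degrees $\geqslant 2p$ is exactly the recurrence, the lower bound $|P|\geqslant(1-r)^{2p}$ and the choice of $r$ correctly produce the factor $e^{2p}(k/p)^{2p}=\bigl(\tfrac{2ek}{2p}\bigr)^{2p}$. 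The only shortfall is the one you name yourself: bounding $|Q|$ on the circle via $|e_i|\leqslant\binom{2p}{i}$ costs an extra factor of order $2p\cdot 4^{p}$, so what you actually prove is the inequality with constant $C^{2p}\cdot\bigl(\tfrac{2ek}{2p}\bigr)^{2p}$ for an absolute $C$, not the exact constant in the statement. This is immaterial for the paper, since in Step 3 of the proof of Proposition \ref{prop:diameterlowerestimate1} the bound is immediately absorbed into constants $C_5,C_6,C_7$ depending only on $p$; but as a proof of the theorem \emph{as stated} you must either sharpen the estimate of $|Q|$ or, as you indicate, defer to the Tur\'an-type results for the precise constant. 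Your handling of the degenerate cases is fine, with one small correction: coincident but nonzero $t_j$ do not lower $\deg P$ (only $t_j=0$ does); either way the argument is unaffected.
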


Recall that $H=F_{(\Omega), \epsilon}^c-F_{(\Omega)}^c$. By Definition \ref{defi:clusterconfig}, the nodes of $F_{(\Omega)}^c$ is inside the interval $\left[-\frac{\Omega h}{2}, \frac{\Omega h}{2}\right]$. The nodes of $F_{(\Omega), \epsilon}^c$, by (\ref{equ:prooflowereq9}), satisfy
$$
\begin{aligned}
	\left|P_{\mathbf{x}, j}\left(F_{(\Omega), \epsilon}^c\right)\right| \leqslant \frac{\Omega h}{2}+K_2(\Omega h)^{-2 p+2} \tilde{\epsilon} \leqslant \frac{\Omega h}{2}+K_2(\Omega h)^{-2 p+2} c_1(\Omega h)^{2 p-1} =(\Omega h)\left(c_1 K_2+\frac{1}{2}\right) .
\end{aligned}
$$
Since $\Omega h \leqslant 2$ by assumption, we can conclude that the factor $R$ in Theorem \ref{thm:taylordominant1} is greater than $C_4=\frac{1}{2\left(c_1 K_2+\frac{1}{2}\right)}$. 

Now, we continue the proof of (\ref{equ:prooflowereq8}). By Theorem \ref{thm:taylordominant1} and (\ref{equ:prooflowereq10}), we have for $k \geqslant 2p$,
\begin{align*}
	\left|m_k(H)\right|  \leqslant\left(\frac{e}{p}\right)^{2 p} k^{2 p} R^{2 p-1-k} \tilde{\epsilon} \leqslant C_5 C_4^{2 p-1-k} k^{2 p} \tilde{\epsilon}. 
\end{align*}
Plugging this into (\ref{equ:taylorexpan1}) we obtain that
$$
|\mathcal{F}(H)(\omega)| \leqslant \frac{\tilde{\epsilon}|2 \pi \omega|^{2 p-1}}{(2 p-1) !}+C_5 C_4^{2 p-1} \tilde{\epsilon} \sum_{k \geqslant 2 p}\left(\frac{2 \pi|\omega|}{C_4}\right)^k \frac{k^{2 p}}{k !}.
$$
Let $\zeta=\frac{2 \pi|\omega|}{C_4}$ and by $|\omega|\leq 1$, we further have 
\begin{align*}
	|\mathcal{F}(H)(\omega)|  \leqslant C_6 \tilde{\epsilon} \sum_{k \geqslant 2 p-1} \zeta^k \frac{k^{2 p}}{k !} \leqslant C_7 \tilde{\epsilon} .
\end{align*}
Therefore, we can choose $c_2=\min \left(1, \frac{1}{2C_7}\right)$ to ensure that
$$
|\mathcal{F}(H)(\omega)| \leqslant \frac{\epsilon}{2}, \quad|\omega| \leqslant 1,
$$
which shows (\ref{equ:prooflowereq11}). 

Finally, we construct the positive signal $F_\epsilon=F_{\epsilon}^{nc}+F_{\epsilon}^c$. Thus we have 
\[
\left|F_\epsilon(s) -F(s) \right|\leq  \left|F_\epsilon^{nc}(s) -F^{nc}(s) \right| + \left|F_\epsilon^{c}(s) -F^{c}(s) \right|\leq \epsilon, \  s\in [-\Omega, \Omega]. 
\]
This completes the proof of Proposition \ref{prop:diameterlowerestimate1} with $k_1=K_1, k_2=K_3, k_3 = \frac{1}{8\pi\Omega M(d-p)}, k_4 = \frac{1}{4(d-p)}$.
\end{proof}

\bibliographystyle{plain}
\bibliography{reference_final} 
\end{document}